\newtheorem{assumption}{\bf{Assumption}}
\newtheorem{theorem}{\bf{Theorem}}
\newtheorem{proof}{Proof}
\begin{document}
%
\title{Theoretical Analysis of Deep Neural Networks in Physical Layer Communication}
%
%
%

\author{Jun~Liu,~Haitao Zhao,~\IEEEmembership{Senior Member, IEEE},~Dongtang Ma,~\IEEEmembership{Senior Member, IEEE},~Kai~Mei and~Jibo~Wei,~\IEEEmembership{Member, IEEE}
\thanks{Manuscript received February 15,~2022;~revised July 3,~2022;~accepted August 17,~2022.~This work was supported in part by National Natural Science Foundation of China (NSFC) under Grant 61931020, 61372099 and 61601480. This paper has been presented in part at the 2022 IEEE Wireless Communications and Networking Conference Workshops \cite{Liu2022}.~(\emph{Corresponding author:~Jun Liu.})}
\thanks{Jun Liu,~Haitao Zhao,~Dongtang Ma,~Kai Mei and Jibo Wei are with the College of Electronic Science and Technology,~National University of Defense Technology,~Changsha~410073,~China~(E-mail:~\{liujun15,~haitaozhao,~dongtangma,~meikai11,~wjbhw\}@nudt.edu.cn).}

}

%

%
%

\markboth{IEEE Transactions on Communications,~Vol.~XX, No.~X,~August~2022}%
{Shell \MakeLowercase{\textit{et al.}}: Bare Demo of IEEEtran.cls for IEEE Communications Society Journals}
%



\maketitle

\begin{abstract}
Recently, deep neural network (DNN)-based physical layer communication techniques have attracted considerable interest. Although their potential to enhance communication systems and superb performance have been validated by simulation experiments, little attention has been paid to the theoretical analysis. Specifically, most studies in the physical layer have tended to focus on the application of DNN models to wireless communication problems but not to theoretically understand how does a DNN work in a communication system. In this paper, we aim to quantitatively analyze why DNNs can achieve comparable performance in the physical layer comparing with traditional techniques, and also drive their cost in terms of computational complexity. To achieve this goal, we first analyze the encoding performance of a DNN-based transmitter and compare it to a traditional one. And then, we theoretically analyze the performance of DNN-based estimator and compare it with traditional estimators. Third, we investigate and validate how information is flown in a DNN-based communication system under the information theoretic concepts. Our analysis develops a concise way to open the “black box” of DNNs in physical layer communication, which can be applied to support the design of DNN-based intelligent communication techniques and help to provide explainable performance assessment.
\end{abstract}

\begin{IEEEkeywords}
Theoretical analysis, deep neural network (DNN), physical layer communication, information theory.
\end{IEEEkeywords}

%
\IEEEpeerreviewmaketitle

\section{Introduction}
\label{Introduction}
%
%
%
%
\IEEEPARstart{T}{he} mathematical theories of communication systems have been developed dramatically since Claude Elwood Shannon's monograph ``A mathematical theory of communication'' \cite{shannon1948mathematical} provides the foundation of digital communication. However, the wireless channel-related gap between theory and practice still needs to be filled due to the difficulty of precisely modeling wireless channels. Recently, deep neural network (DNN) has drawn a lot of attention as a powerful tool to solve science and engineering problems such as protein structure prediction \cite{senior2020improved}, image recognition \cite{He2016}, speech recognition \cite{AbdelHamid2014} and natural language processing \cite{Hirschberg2015} that are virtually impossible to explicitly formulate. These promising approaches motivate researchers to implement DNNs in existing physical layer communication.

In order to mitigate the gap, a natural thought is to let a DNN to jointly optimize a transmitter and a receiver for a given channel model without being limited to component-wise optimization. Autoencoders (AEs) are considered as a tool to solve this problem. An autoencoder is a type of artificial neural network used to learn efficient codings of unlabeled data. An autoencoder has two main parts: an encoder that maps the input into the code, and a decoder that maps the code to a reconstruction of the input. This structure is equivalent to the concept of a communication system. Along this thread, pure data-driven AE-based end-to-end communication systems are firstly proposed to jointly optimize transmitter and receiver components \cite{o2017introduction,o2018physical,zhu2019joint,MorochoCayamcela2020}. And then, T. O’Shea and J. Hoydis consider the linear and nonlinear steps of processing the received signal as a radio transformer network (RTN) in \cite{o2017introduction}, which can be integrated into the end-to-end training process. The idea of end-to-end learning of communication system and RTN through DNN is extended to orthogonal frequency division multiplexing (OFDM) in \cite{felix2018ofdm} and multiple-input multiple-output (MIMO) in \cite{OShea2017Physical}.

Unfortunately, training these DNNs in practical wireless channels is never a trivial issue. First, these methods require the reliable feedback links. As Shannon once described in \cite{shannon1948mathematical}, the fundamental problem of communication is described as “reproducing at one point either exactly or approximately a message selected at another point”. But, AE-based methods present a “chicken and egg” problem. That is to say, first need a reliable communications system to do the error back-propagation to actually train an end-to-end communication system for you. This leads to the paradox \cite{Liu2021Fine}. To tackle this problem, DNNs are trained offline and then tested online in practical applications. However, this implementation strategy leads to the second problem that these methods assume the availability of the explicit channel model to calculate the gradients for the optimization. Still, the unavailability of perfect channel information forces these methods to adopt simulation-based strategy to train the DNN, which usually results the model mismatch problem. Specifically, the DNN models trained offline show significant performance degradation in test unless both training and test sets are subjected to the same probability distribution. Another resolution is to sample a wireless channel through transmitting probe signal from a neural network-based transmitter. For example, since an AE requires a differentiable channel model, the proposed method in reference \cite{Aoudia2019} calculates the gradient  w.r.t the neural network-based transmitter's parameters through sampling of the channel distribution. Similarly, reference \cite{raj2018backpropagating} utilizes stochastic perturbation technique to train an end-to-end communication system without relying on explicit channel models but the number of training samples is still prohibitive.

Another idea is to estimate the channel as accurate as possible and recover channel state information (CSI) by implementing a DNN at the side of a receiver so that the effects of fading could be reduced. This strategy usually can be divided into two main categories, one using pure data to train a DNN (also known as the \emph{data-driven}) and the other combining data and current models to train a DNN (also known as the \emph{model-driven}). In the data-driven manner, the neural networks (NNs) are optimized merely over a large training data set labeled by true channel responses and a mathematically tractable model is unnecessary \cite{qin2019deep}. The authors of \cite{wen2018deep} propose a data-driven end-to-end DNN-based CSI compression feedback and recovery mechanism which is further extended with long short-term memory (LSTM) to tackle time-varying massive MIMO channels \cite{wang2018deep}. To achieve better estimation performance and reduce computation cost, a compact and flexible deep residual network architecture is proposed to conduct channel estimation for an OFDM system based on downlink pilots in \cite{li2019deep}. Nevertheless, the performance of the data-driven approaches heavily depends on an enormous amount of labeled data which cannot be easily obtained in wireless communication. To address this issue, a plethora of model-driven research gradually has been carried out to achieve efficient receivers \cite{zhang2019deep,Yang2019,Balevi2019,Mei2021,Nguyen2021}. Instead of only using the enormous size of labeled data, in the model-driven manner, domain knowledge is also used to construct a DNN \cite{he2019model}. For example, in model-driven channel estimation, the least-square (LS) estimations usually are fed into a DNN, and then the DNN yields the enhanced channel estimates. Furthermore, in order to mitigate the disturbances, in addition to Gaussian noise, such as channel fading and nonlinear distortion, and further reduce the computation complexity of training, \cite{liu2019online} proposes an online fully complex extreme learning machine (ELM)-based channel estimation and equalization scheme.

Comparing with traditional physical layer communication systems, the above-mentioned DNN-based techniques show competitive performance by simulation experiments. However, the dynamics behind the DNN in physical layer communication remains unknown. In the domain of information theory, a plethora of research has been conducted to investigate the process of learning. In \cite{Tishby2000}, Tishby et al. propose the information bottleneck (IB) method which provides a framework for discussing a variety of problems in signal processing, learning, etc. Then, in \cite{Tishby2015}, DNNs are analyzed via the theoretical framework of the IB principle. In \cite{zaidi2020information}, the variants of the IB problem are discussed. In \cite{Aguerri2021}, Tishby’s centralized IB method is generalized to the distributed setting. Reference \cite{Xu2021} considers the IB problem of a Rayleigh fading MIMO channel with an oblivious relay. However, the considered problems in these researches are different from that in wireless communication standpoint in which the complexity-relevance tradeoffs usually are not considered. Moreover, there are still three major problems. (i) Although it has been shown by simulations that AE-based end-to-end communication systems can approach the optimal symbol error rate (SER), i.e., the SER of a system using optimal constellation, the quantitative comparative analysis has not been conducted. (ii) As a module in a receiver, how does a DNN process information has not been quantitatively investigated. (iii) The methodology to design data sets and the structure of a neural network, which plays an important role in the neural network-based channel estimation, have not been the theoretically studied.

In this paper, we attempt to first give a mathematical explanation to reveal the mechanism of end-to-end DNN-based communication systems. Then, we try to unveil the role of the DNNs in the task of channel estimation. We believe that we have developed a concise way to open as well as understand the ``black box'' of DNNs in physical layer communication, and hence our main contributions of this paper are fourfold:
\begin{itemize}
	\item  Instead of proposing a scheme combining a DNN with a typical communication system, we analyze the behaviors of a DNN-based communication system from the perspectives of the whole DNN (communication system), encoder (transmitter) and decoder (receiver). And we also analyze and compare the performance of the DNN-based transmitter with the conventional method, i.e., the gradient-search algorithm, in terms of both convergence properties and computational complexity.
	
	\item We consider the task of channel estimation as a typical inference problem. With the information theory, we analyze and compare the performance of the DNN-based channel estimation with LS and linear minimum mean-squared error (LMMSE) estimators. Furthermore, we derive the analytical relation between the hyperparameters, as well as training sets, and the performance.
	
	\item We conduct computer simulations and the results verify that the constellations produced by AEs are equivalent to the (locally) optimum constellations obtained by the gradient-search algorithm which minimize the asymptotic probability of error in Gaussian noise under an average power constraint. 
	
	\item Through simulation experiments, our theoretical analysis is validated, and the information flow in the DNNs in the task of channel estimation is estimated by using matrix-based functional of Rényi’s $\alpha$-entropy to approximate Shannon’s entropy.
	
\end{itemize}

To the best of our knowledge, there are typically two approaches to integrate DNN with communication systems. (i) Holistic approach treats a communication system as an end-to-end process, which uses an AE to replace a whole communication system. (ii) Phase-oriented approach, which only investigates the application of DNN in a certain module of communication process \cite{Liao2020}. Therefore, without loss of generality, we mainly investigate two cases: the AE-based communication system and the DNN independently deployed at a receiver.

We note that a shorter conference version of this paper has appeared in IEEE Wireless Communications and Networking Conference (2022). Our initial conference paper gives preliminary simulation results. This manuscript provides comprehensive analysis and proof.

The remainder of this paper is organized as follows. In Section \ref{System Model}, we give the system model and then formulate the problem in Section \ref{Section Encoder} and \ref{Section Decoder} . Next, simulation results are presented in Section \ref{Simulation Results}. Finally, the conclusions are drawn in Section \ref{Conclusion}.

\textit{Notations:}~The notations adopted in the paper are as follows. We use boldface lowercase $\bf{x}$, capital letters $\bf{X}$ and calligraphic letters $\mathcal X$ to denote column vectors, matrices and sets respectively. For a matrix $\bf{X}$, we use ${{\bf{X}}_{ij}}$ to denote its $\left( {i,j} \right)$-th entry. For a vector $\bf{x}$, we use ${\left\| {\bf{x}} \right\|_2}$ to denote the Euclidean norm. For a $m \times n$ matrix $\bf{X}$, we use ${\left\| {\bf{X}} \right\|_F} = \sqrt {\sum\limits_i^m {\sum\limits_j^n {{{\left| {{{\bf{X}}_{ij}}} \right|}^2}} } } $ to denote Frobenius norm and ${\left\| {\bf{X}} \right\|_2} = {\sigma _{\max }}\left( {\bf{X}} \right)$ to denote the operator norm, where ${\sigma _{\max }}\left( {\bf{X}} \right)$ represents the largest singular value of matrix $\bf{X}$. If a matrix $\bf{X}$ is positive semi-definite, we use ${\lambda _{\min }}\left( {\bf{X}} \right)$ to denote its smallest eigenvalue. We use $\left\langle { \cdot , \cdot } \right\rangle $ to denote the standard Euclidean inner product between two vectors or matrices. We let $[n] = \left\{ {1,2, \ldots n} \right\}$. We use ${\mathcal{N}_d}\left( {{\bf{0}},{{\bf{I}}_d}} \right)$ to denote $d$-dimensional standard Gaussian distribution. We also use $O\left(  \cdot  \right)$ to denote standard Big-O only hiding constants. In addition,~$  \odot  $ denotes the Hadamard product,~${\mathbb{E}}\left[  \cdot  \right]$ denotes the expectation operation, and ${\rm{tr}}\left[  \cdot  \right]$ denotes the trace of a matrix.

\begin{figure*}[t]
	\centering
	\includegraphics[width=5.5in]{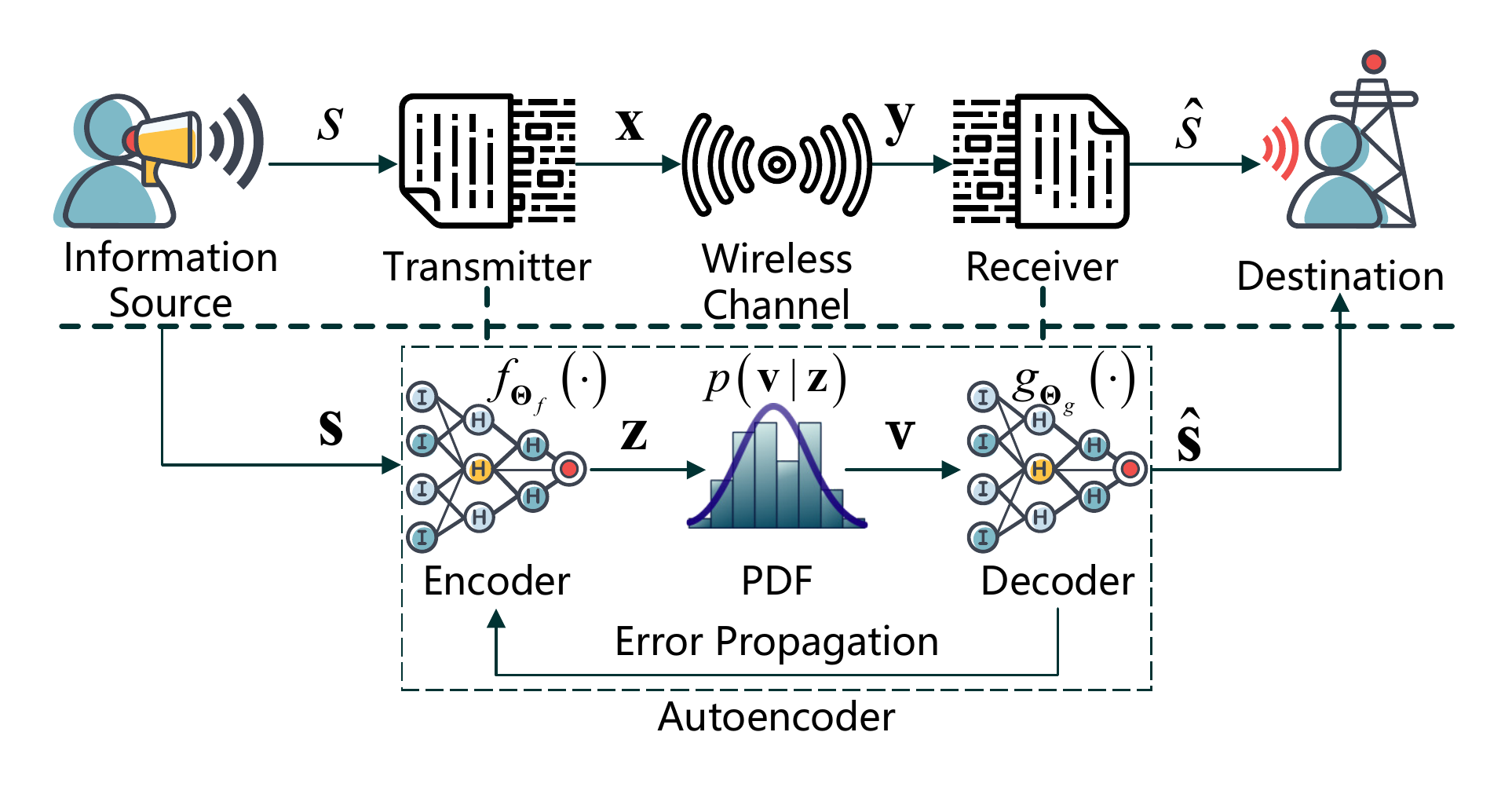}
	\caption{Schematic diagram of a general communication system and its corresponding AE representation.}
	\label{CommunicationsSystems}
\end{figure*}

\section{System Model}
\label{System Model}

In this section, we first describe the considered system model and then provide a detailed explanation of the problem formulation from two different perspectives in the following sections. 
\subsection{Traditional Communication System}
As shown in the upper part of Fig. \ref{CommunicationsSystems}, consider the process of message transmission from the perspective of a typical communication system. We assume that an information source generates a sequence of ${\log _2}M$-bit message symbols ${s} \in \left\{ {1,2, \cdots ,M} \right\}$ to be communicated to the destination. Then the modulation modules inside the transmitter map each symbol $s$ to a signal ${\bf{x}} \in {\mathbb{R}^d}$, where $d$ denoted the dimension of the signal space. The signal alphabet is denoted by ${{\bf{x}}_1},{{\bf{x}}_2}, \cdots ,{{\bf{x}}_M}$. During channel transmission, $d$-dimensional signal $\bf{x}$ is corrupted to ${\bf{y}} \in {\mathbb{R}^d}$ with conditional probability density function (PDF) $p\left( {{\bf{y}}|{\bf{x}}} \right) = \prod _{i = 1}^dp\left( {{y_i}|{x_i}} \right)$. In this paper, we use $d/2$ bandpass channels, each with separately modulated inphase and quadrature components to transmit the $d$-dimensional signal \cite{sklar2014digital}. Finally, the received signal is mapped by the demodulation module inside the receiver to ${\hat s}$ which is an estimate of the transmitted symbol $s$. The procedures mentioned above have been exhaustively presented by Shannon.

\subsection{Understanding Autoencoders on Message Transmission}
From the point of view of filtering or signal inference, the idea of AE-based communication system matches Norbert Wiener's perspective \cite{yu2017autoencoders}. As shown in the lower part of the Fig. \ref{CommunicationsSystems}, the AE consists of an encoder and a decoder and each of them is a feedforward neural network with parameters (weights and biases) ${{{\bf{\Theta }}_f}}$ and ${{{\bf{\Theta }}_g}}$, respectively. Note that each symbol $s$ from information source usually needs to be encoded to a one-hot vector ${\bf{s}} \in {\mathbb{R}^M}$ and then is fed into the encoder. Under a given constraint (e.g., average signal power constraint), the PDF of a wireless channel and a loss function to minimize  symbol error probability, the encoder and decoder are respectively able to learn to appropriately represent $\bf{s}$ as ${\bf{z}} = f\left( {\bf{s}},{{{\bf{\Theta }}_f}} \right)$ and to map the corrupted signal $\bf{v}$ to an estimate of transmitted symbol ${\bf{\hat s}} = g \left( {\bf{v}},{{{\bf{\Theta }}_g}} \right)$ where ${\bf{z}},{\bf{v}} \in {\mathbb{R}^d}$. Here, we use ${{\bf{z}}_1},{{\bf{z}}_2}, \cdots ,{{\bf{z}}_M}$ denoted the transmitted signals from the encoder in order to distinguish it from the transmitted signals from the transmitter.

From the perspective of the whole AE (communication system), it aims to transmit information to a destination with low error probability. The symbol error probability, i.e., the probability that the wireless channel has shifted a signal point into another signal's decision region, is 
\begin{equation}
	{P_e} = \frac{1}{M}\sum\limits_{m = 1}^M {\Pr \left( {{\bf{\hat s}} \ne {{\bf{s}}_m}|{{\bf{s}}_m}~{\rm{transmitted}}} \right)}.
\label{Pe}
\end{equation}
The AE can use the cross-entropy loss function
\begin{equation}
	\begin{aligned}
		{\mathcal L_{\log }}\left( {{\bf{\hat s}},{\bf{s}};{{\bf{\Theta }}_f},{{\bf{\Theta }}_g}} \right) &=  - \frac{1}{n}\sum\limits_{i = 1}^n {\sum\limits_{j = 1}^M {{{\bf{s}}_i}\left[ j \right]\log \left( {{{{\bf{\hat s}}}_i}\left[ j \right]} \right)} }\\ 
		&   =  - \frac{1}{n}\sum\limits_{i = 1}^n {\log \left( {{{{\bf{\hat s}}}_i}\left[ s \right]} \right)} 
	\end{aligned}
\label{LossFunction}
\end{equation}
to represent the cost brought by inaccuracy of prediction where ${{{\bf{s}}_{i}}\left[ j \right]}$ denotes the $j$-th element of the $i$-th symbol in a training set with $n$ symbols. In order to train the AE to minimize the symbol error probability, the optimal parameters could be found by optimizing the loss function
\begin{equation}
	\begin{aligned}
	&\left( {{\bf{\Theta }}_f^ * ,{\bf{\Theta }}_g^ * } \right) = \mathop {\arg \min }\limits_{\left( {{{\bf{\Theta }}_f},{{\bf{\Theta }}_g}} \right)} \left[ {{\mathcal L_{\log }}\left( {{\bf{\hat s}},{\bf{s}};{{\bf{\Theta }}_f},{{\bf{\Theta }}_g}} \right)} \right]\\
		&{\rm{subject~to~}}{\mathbb{E}}\left[ {\left\| {\bf{z}} \right\|_2^2} \right] \le {P_{{\rm{av}}}}
	\end{aligned}
\label{Optimization_AE}
\end{equation}
where $P_{{\rm{av}}}$ denotes the average power. In this paper, we set $P_{{\rm{av}}}={1/M}$. Now, it is important to explain how does the mapping ${\bf{z}} = f \left( {\bf{s}},{{{\bf{\Theta }}_f}} \right)$  vary after the training was done.
\section{Encoder: Finding a Good Representation}
\label{Section Encoder}
Let's pay attention to the encoder (transmitter). In the domain of communication, an encoder needs to learn a robust representation ${\bf{z}} = {f_{{{\bf{\Theta }}_f}}}\left( {\bf{s}} \right)$ to transmit $\bf{s}$ against channel disturbances, including thermal noise, channel fading, nonlinear distortion, phase jitter, etc. This is equivalent to find a coded (or uncoded) modulation scheme with the signal set of size $M$ to map a symbol $\bf{s}$ to a point $\bf{z}$ for a given transmitted power, which maximizes the minimum distance between any two constellation points. Usually the problem of finding good signal constellations for a Gaussian channel\footnote{The problem of constellation optimization is usually considered under the condition of the Gaussian channel. Although the problem under the condition of Rayleigh fading channel has been studied in \cite{boutros1996good}, its prerequisite is that the side information is perfect known.} is associated with the search for lattices with high packing density which is a well-studied problem in the mathematical literature \cite{jorge2015algebraic}. This issue can be addressed through two different methods as follows.

\subsection{Traditional Method: Gradient-Search Algorithm}
The eminent work of \cite{foschini1974optimization} proposed a gradient-search algorithm to obtain the optimum constellations. Consider a zero-mean stationary additive white Gaussian noise (AWGN) channel with one-sided spectral density $2N_0$. For large signal-to-noise ratio (SNR), the asymptotic approximation of (\ref{Pe}) can be written as 
\begin{equation}
	{P_e} \sim \exp \left( { - \frac{1}{{8{N_0}}}\mathop {\min }\limits_{i \ne j} \left\| {{{\bf{z}}_i} - {{\bf{z}}_j}} \right\|_2^2} \right).
\end{equation}
To minimize $P_e$, the problem can be formulated  as
\begin{equation}
	\begin{aligned}
		&\left\{ {{\bf{z}}_m^ * } \right\}_{m = 1}^M = \mathop {\arg \min }\limits_{\left\{ {{{\bf{z}}_m}} \right\}_{m = 1}^M} \left( {{P_e}} \right)\\
		&{\rm{subject~to~}}{\mathbb{E}}\left[ {\left\| {\bf{z}} \right\|_2^2} \right] \le {P_{{\rm{av}}}}
	\end{aligned}
\label{Optimization_Asymptotic}
\end{equation}
where $\left\{ {{\bf{z}}_m^ * } \right\}_{m = 1}^M$ denotes the optimal signal set. This optimization problem can be solved by using a constrained gradient-search algorithm. We denote $\left\{ {{\bf{z}}_m } \right\}_{m = 1}^M$ as an $M \times d$ matrix
\begin{equation}
	{\bf{Z}} = {\left[ {{{\bf{z}}_1},{{\bf{z}}_2}, \cdots ,{{\bf{z}}_M}} \right]^T}.
\end{equation}
Then, the $k$-th step of the constrained gradient-search algorithm can be described by
\begin{subequations}
	\begin{align}
		&{\bf{Z}}_{k + 1}^{\rm{'}} = {{\bf{Z}}_k} - {\eta_k}\nabla {P_e}\left( {{{\bf{Z}}_k}} \right) \\
		&{{\bf{Z}}_{k + 1}} = \frac{{{\bf{Z}}_{k + 1}^{\rm{'}}}}{{\sum\limits_i {\sum\limits_j {{{\left( {{\bf{Z}}_{k + 1}^{\rm{'}}\left[ {i,j} \right]} \right)}^2}} } }}
	\end{align}
\label{MthStep}
\end{subequations}
where $\eta_k$ denotes step size and $\nabla {P_e}\left( {{{\bf{Z}}_k}} \right) \in {\mathbb{R}^{M \times d}}$ denotes the gradient of $P_e$ respect to the current constellation points.
It can be written as
\begin{equation}
	\nabla {P_e}\left( {{{\bf{Z}}_k}} \right) = {\left[ {{{\bf{g}}_1},{{\bf{g}}_2}, \cdots ,{{\bf{g}}_M}} \right]^T}
\end{equation}
where
\begin{equation}
	{{\bf{g}}_m} \sim  - \sum\limits_{i \ne m} {\exp \left( { - \frac{{\left\| {{{\bf{z}}_m} - {{\bf{z}}_i}} \right\|_2^2}}{{8{N_0}}}} \right)\left( {\frac{1}{{\left\| {{{\bf{z}}_m} - {{\bf{z}}_i}} \right\|_2^2}} + \frac{1}{{4{N_0}}}} \right){{\bf{1}}_{{{\bf{z}}_m} - {{\bf{z}}_i}}}}. 
\end{equation}
The vector ${{{\bf{1}}_{{{\bf{z}}_m} - {{\bf{z}}_i}}}}$ denotes $d$-dimensional unit vector in the direction of ${{{\bf{z}}_m} - {{\bf{z}}_i}}$.

Comparing (\ref{Optimization_AE}) to (\ref{Optimization_Asymptotic}), we can understand the mechanism of the encoder in an AE-based communication system. Its optimized variables of AE method are the parameters ${\Theta _f}$ and ${\Theta _g}$. In other words, AE learns the constellation design through simultaneously optimizing the parameters of the encoder and decoder. It does not directly optimize the constellation points ${\bf{s}}$. Differently, the gradient-search algorithm directly optimizes the ${\bf{s}}$. Although the optimized variables of these two methods are different, their optimization goals of minimizing the SER are essentially identical. Therefore, the mapping function of encoder can be represented as
\begin{equation}
\left\{ {f\left( {{{\bf{s}}_m},{\bf{\Theta }}_f^*} \right)} \right\}_{m = 1}^M \to \left\{ {{\bf{z}}_m^*} \right\}_{m = 1}^M
\end{equation}
when the PDF used for generating training samples is multivariate zero-mean normal distribution ${\bf{v}} - {\bf{z}} \sim {{\cal N}_d}({\bf{\vec 0}},{\mkern 1mu} {\bm{\Sigma }})$ where ${{\bf{\vec 0}}}$ denotes $d$-dimensional zero vector and ${\bm{\Sigma }} = \left( {2{N_0}/d} \right){\bf{I}}$ is an $d \times d$ diagonal matrix. In the next subsection, detailed explanation is given.

\subsection{Neural Network-based Method}
Unlike gradient-search algorithm, models based on neural networks are created directly from data by an algorithm. However, under most cases, these models are "black boxes", which means that humans, even those who design them, cannot understand how variables are being combined to make predictions \cite{Rudin2019Why}. At this stage, the task of wireless communication does not require the interpretability of neural networks. However, the theoretical and comparative analyses of a neural network-based communication system are indispensable since an accurate interpretable model has been built.

Let the one-hot vector ${\bf{s}}\in {\mathbb{R}^M}$ be the input ${\bf{x}}$, ${{\bf{W}}^{\left( 1 \right)}} \in {\mathbb{R}^{m \times M}}$ is the first weight matrix, ${{\bf{W}}^{\left( h \right)}} \in {\mathbb{R}^{m \times m}}$ is the weight at the $h$-th layer for $2 \le h \le {H}$ and $\sigma \left(  \cdot  \right)$ is the activation function. We assume intermediate layers are square matrices for the sake of simplicity. There are $H_1$ and $H_2$ hidden layers at the side of transmitter and receiver, respectively. The prediction function can be defined recursively as 
\begin{equation}
	\begin{aligned}
		&{{\bf{x}}^{\left( h \right)}} = \sqrt {\frac{{{c_\sigma }}}{m}} \sigma \left( {{{\bf{W}}^{\left( h \right)}}{{\bf{x}}^{\left( {h - 1} \right)}}} \right),\\
		&h \in \left[{H} \right]\setminus \left\{ {{H_1} + 1} \right\}\
	\end{aligned}
\end{equation}
where ${c_\sigma } = {\left( {{\mathbb{E}_{x \sim \mathcal{N}\left( {0,1} \right)}}\left[ {\sigma {{\left( x \right)}^2}} \right]} \right)^{ - 1}}$ is a scaling factor to normalize the input in the initialization phase, and $\left(H_1+1\right)$-th layer is defined as channel layer. Note that $\left[{H} \right]\setminus \left\{ {{H_1} + 1} \right\}$ is the set of elements that belong to $\left[{H} \right]$ but not to $\left\{ {{H_1} + 1} \right\}$. To constrain the average power of transmitted signal $P_{{\rm{av}}}$ to ${1/M}$, ${\bf{x}}^{H_1}$ is normalized as 	  
\begin{equation}
	{\bf{z}} = f\left( {{\bf{x}},{\Theta _f}} \right) = \sqrt {\frac{1}{{\mathbb{E}\left[ {\left\| {{{\bf{x}}^{\left( {{H_1}} \right)}}} \right\|_2^2} \right]M}}} {{\bf{x}}^{\left( {{H_1}} \right)}}.
\end{equation}
Then, the effect on the transmitted signal resulting from a wireless channel can be expressed as  
\begin{equation}
	{\bf{v}} = h\left( {{\bf{z}},{\Theta _h}} \right)
	\label{Channel Layer}
\end{equation}
where ${h_{{\Theta _h}}}$ is the functional form of the wireless channel with parameter set $\Theta _h$\footnote{In accordance with practice, and without introducing ambiguity, $h$ is used to denote hidden layer and channel in the context of neural network and physical communication, respectively.}. Let ${{\bf{x}}^{\left( H_1+1 \right)}}=\bf{v}$, and finally, the received signal is demodulated as 
\begin{equation}
{\bf{\hat s}} = g\left( {{{\bf{x}}^{\left( {{H_1} + 1} \right)}},{\Theta _g}} \right) = {\sigma ^{\left( H \right)}}\left( {{{\bf{W}}^{\left( H \right)}}{{\bf{x}}^{\left( {H - 1} \right)}}} \right)
\end{equation}
where ${\sigma ^{\left( H \right)}}\left(  \cdot  \right) = {\rm{softmax}}\left(  \cdot  \right)$.

To theoretically analyze the neural network, some technical conditions on the activation function are imposed. The first condition is \emph{Lipschitz and Smooth}. The second is that $\sigma \left(  \cdot  \right)$ is analytic and is not a polynomial function. In this section, softplus ${\sigma ^{\left( h \right)}}\left( z \right) = \log \left( {1 + \exp \left( z \right)} \right)$ is chosen for the hidden layers except for the ${H_1} + 1$ and $H$-th layers. 

While training the deep neural network, randomly initialized gradient descent algorithm is used to optimize the empirical loss (\ref{LossFunction}). Specifically, for every layer $h \in \left[ {H} \right]\backslash \left\{ {{H_1} + 1} \right\}$, each entry is sample from a standard Gaussian distribution, ${\bf{W}}_{ij}^{\left( h \right)} \sim \mathcal{N}\left( {0,1} \right)$. Then, the values for parameters can be updated by gradient descent, for $k = 1,2, \ldots ,$ and $h \in \left[ {H} \right]\backslash \left\{ {{H_1} + 1} \right\}$ as 
\begin{equation}
	{{\bf{W}}^{\left( h \right)}}\left( k \right) = {{\bf{W}}^{\left( h \right)}}\left( {k - 1} \right) - \eta \frac{{\partial \mathcal{L}_{\rm{log}}\left( {\Theta \left( {k - 1} \right)} \right)}}{{\partial {{\bf{W}}^{\left( h \right)}}\left( {k - 1} \right)}}
\end{equation}
where $\eta  > 0$ is the step size.

The update of parameters ${\Theta _g}$ at the side of the receiver can be realized as
\begin{equation}
	{\Theta _g}\left( k \right) = {\Theta _g}\left( {k - 1} \right) - \eta \frac{{\partial {\mathcal L_{\log }}\left( {\Theta \left( {k - 1} \right)} \right)}}{{\partial {g_{{\Theta _g}}}}}\frac{{\partial {g_{{\Theta _g}}}}}{{\partial {\Theta _g}}}
\end{equation}
since we know the function ${g_{{\Theta _g}}}\left(  \cdot  \right)$ entirely. At the side of the transmitter, it becomes 
\begin{equation}
		{\Theta _f}\left( k \right) = {\Theta _f}\left( {k - 1} \right) - \eta \frac{{\partial {\mathcal L_{\log }}\left( {\Theta \left( {k - 1} \right)} \right)}}{{\partial {g_{{\Theta _g}}}}}\frac{{\partial {g_{{\Theta _g}}}}}{{\partial {h_{{\Theta _h}}}}}\frac{{\partial {h_{{\Theta _h}}}}}{{\partial {f_{{\Theta _f}}}}}\frac{{\partial {f_{{\Theta _f}}}}}{{\partial {\Theta _f}}}
\label{Backpropagation to Transmitter}
\end{equation}
where the terms $\frac{{\partial {g_{{\Theta _g}}}}}{{\partial {h_{{\Theta _h}}}}}$ and $\frac{{\partial {h_{{\Theta _h}}}}}{{\partial {f_{{\Theta _f}}}}}$ are difficult to acquire unless the knowledge about the channel ${{h_{{\Theta _h}}}}$ is fully known. In this paper, we consider both the Gaussian channel and the Rayleigh flat fading channel.

\subsubsection{Gaussian Channel}
Let ${\bf{n}} \in {\mathbb{R}^m}$ is a white Gaussian noise vector and the variance of each entry is $\sigma _n^2$. The output of the channel layer can be expressed as
\begin{equation}
	{{\bf{x}}^{\left( {{H_1} + 1} \right)}} = {{\bf{W}}^{\left( {{H_1} + 1} \right)}}c_\sigma ^{\left( {{H_1}} \right)}{{\bf{x}}^{\left( {{H_1}} \right)}} + {\bf{n}}
	\label{Gaussian Channel}
\end{equation}
where  ${{\bf{W}}^{\left( {{H_1} + 1} \right)}} = {{\bf{I}}_m}$ and $c_\sigma ^{\left( {{H_1}} \right)}=\sqrt {1/\left( {\mathbb{E}\left[ {\left\| {{{\bf{x}}^{\left( {{H_1}} \right)}}} \right\|_2^2} \right]M} \right)} $. Then, (\ref{Gaussian Channel}) can be expressed as
\begin{equation}
	{{\bf{x}}^{\left( {{H_1} + 1} \right)}} = {{{\bf{W'}}}^{\left( {{H_1} + 1} \right)}}{{{\bf{x'}}}^{\left( {{H_1}} \right)}}
	\label{Equivalence Weight}
\end{equation}
where ${{{\bf{W'}}}^{\left( {{H_1} + 1} \right)}} = \left[ {{{\bf{I}}_m},{\bf{n}}} \right] \in {{\mathbb{R}}^{m \times \left( {m + 1} \right)}}$ denotes the equivalent weights of the channel layer and ${{{\bf{x'}}}^{\left( {{H_1}} \right)}} = \left[ {c_\sigma ^{\left( {{H_1}} \right)}{{{\bf{x'}}}^{\left( {{H_1}} \right)}};1} \right] \in {\mathbb{R}^{m + 1}}$. Finally, the terms $\frac{{\partial {g_{{\Theta _g}}}}}{{\partial {h_{{\Theta _h}}}}}$ and $\frac{{\partial {h_{{\Theta _h}}}}}{{\partial {f_{{\Theta _f}}}}}$ can be written as
\begin{subequations}
	\begin{align}
		&\frac{{\partial {g_{{\Theta _g}}}}}{{\partial {h_{{\Theta _h}}}}}=\frac{{\partial {g_{{\Theta _g}}}}}{{\partial {{{\bf{W'}}}^{\left( {H_1 + 1} \right)}}}}\\
		&\frac{{\partial {h_{{\Theta _h}}}}}{{\partial {f_{{\Theta _f}}}}} = \frac{{\partial {{{\bf{W'}}}^{\left( {{H_1} + 1} \right)}}}}{{\partial {f_{{\Theta _f}}}}}.
	\end{align}	
\label{Gradient of Channel_Gaussian}
\end{subequations}
Substituting \ref{Gradient of Channel_Gaussian} into (\ref{Backpropagation to Transmitter}), we get
\begin{equation}
	\begin{aligned}
		{\Theta _f}\left( k \right) =& {\Theta _f}\left( {k - 1} \right) -\\
		&\eta \frac{{\partial {L_{\log }}\left( {\Theta \left( {k - 1} \right)} \right)}}{{\partial {g_{{\Theta _g}}}}}\frac{{\partial {g_{{\Theta _g}}}}}{{\partial {{{\bf{W'}}}^{\left( {{H_1} + 1} \right)}}}}\frac{{\partial {{{\bf{W'}}}^{\left( {{H_1} + 1} \right)}}}}{{\partial {f_{{\Theta _f}}}}}\frac{{\partial {f_{{\Theta _f}}}}}{{\partial {\Theta _f}}}.
	\end{aligned}
	\label{Analysis of Backpropagation to Transmitter_Gaussian}
\end{equation}
\subsubsection{Fading Channel}
We consider a Rayleigh flat fading channel for simplicity. It is not difficult to generalize our analysis to other fading channels, e.g., frequency selective fading channels. 

To transmitted the modulated signal ${{\bf{x'}}^{\left( {{H_1}} \right)}}$, $m/2$ bandpass channels are needed. We assume that the channel impulse of each channel is mutually independent, i.e., ${{{\bf{W'}}}^{\left( {{H_1} + 1} \right)}} = \left[ {{\bf{H}},{\bf{n}}} \right] \in {^{m \times \left( {m + 1} \right)}}$ where 
\begin{equation}
	{\bf{H}} = \left[ {\begin{array}{*{20}{c}}
			{{h_{1,{\rm{I}}}}}&0& \cdots &0\\
			0&{{h_{1,{\rm{Q}}}}}& \cdots &0\\
			\vdots & \vdots & \ddots & \vdots \\
			0&0& \cdots &{{h_{m/2,{\rm{Q}}}}}
	\end{array}} \right],
\end{equation}
and $\left( {{h_{1,{\rm{I}}}},{h_{1,{\rm{Q}}}}, \ldots ,{h_{m/2,{\rm{Q}}}}} \right) \sim {{\mathcal{N}}_m}\left( {0,{{\bf{I}}_m}} \right)$. The real and imaginary parts of the channel impulse of the $i$-th bandpass channel are denoted as ${h_{i,{\rm{I}}}}$ and ${h_{i,{\rm{Q}}}}$, respectively. In this case, the parameters update of parameters at this side of transmitter has the same form as (\ref{Analysis of Backpropagation to Transmitter_Gaussian}).

\subsection{Properties of Convergence}
The convergence properties of the traditional algorithm, i.e., gradient-search algorithm, have been exhaustively analyzed in \cite{foschini1974optimization}. Whereas, the properties of the AE-abased algorithm have not been studied yet, and therefore, we mainly try to analyze the convergence properties of the AE-based algorithm in this subsection. 

In \cite{du2018gradient}, Simon S. Du et al. analyze two-layer fully connected ReLU activated neural networks. It has been shown that, with over-parameterization, gradient descent provably converges to the global minimum of the empirical loss at a linear convergence rate. Then, they develop a unified proof strategy for the fully-connected neural network, ResNet and convolutional ResNet \cite{Du2019}.

We replace (\ref{LossFunction}) with square loss function as 
\begin{equation}
	{\mathcal{L}_2} = \frac{1}{2}\sum\limits_{i = 1}^n {{{\left( {{{{\bf{\hat s}}}_i}\left[ s \right] - 1} \right)}^2}} 
\end{equation}
Then, the individual prediction at the $k$-th iteration can be denoted as
\begin{equation}
	{{\hat s}_i} = {g_{{\Theta _g}}}\left( {{f_{{\Theta _f}}}\left( {{{\bf{s}}_i}} \right)} \right)\left[ s \right].
\end{equation} 
Let ${\bf{\hat s}}\left( k \right) = {\left( {{{\hat s}_1}\left( k \right), \ldots ,{{\hat s}_n}\left( k \right)} \right)^T} \in {\mathbb{R}^n}$ where $n$ denotes the size of the training set. Simon S. Du et al. \cite{du2018gradient, Du2019} show that for DNN, the sequence $\left\{ {{\bf{1}} - {\bf{\hat s}}\left( k \right)} \right\}_{k = 0}^\infty $ admits the dynamics
\begin{equation}
	{\bf{1}} - {\bf{\hat s}}\left( {k + 1} \right) = \left( {{\bf{I}} - \eta {\bf{G}}\left( k \right)} \right)\left( {{\bf{1}} - {\bf{\hat s}}\left( k \right)} \right)
	\label{Linear Dynamics}
\end{equation}
where
\begin{equation}
	\begin{aligned}
		{{\bf{G}}_{ij}}\left( k \right) &= \left\langle {\frac{{\partial {{\hat s}_i}\left( k \right)}}{{\partial \Theta \left( k \right)}},\frac{{\partial {{\hat s}_j}\left( k \right)}}{{\partial \Theta \left( k \right)}}} \right\rangle\\
		& = \sum\limits_{h = 1}^H {\left\langle {\frac{{\partial {{\hat s}_i}\left( k \right)}}{{\partial {{\bf{W}}^{\left( h \right)}}\left( k \right)}},\frac{{\partial {{\hat s}_j}\left( k \right)}}{{\partial {{\bf{W}}^{\left( h \right)}}\left( k \right)}}} \right\rangle }\\
		& \buildrel \Delta \over = \sum\limits_{h = 1}^H {{\bf{G}}_{ij}^{\left( h \right)}\left( k \right)}.  
	\end{aligned}
\end{equation}

The Gram matrix induced by the weights from $h$-th layer ${{\bf{G}}^{\left( h \right)}} \in {\mathbb{R}^{n \times n}}$ is defined as ${\bf{G}}_{ij}^{\left( h \right)}\left( k \right) = \left\langle {\frac{{\partial {{\hat s}_i}\left( k \right)}}{{\partial {{\bf{W}}^{\left( h \right)}}\left( k \right)}},\frac{{\partial {{\hat s}_j}\left( k \right)}}{{\partial {{\bf{W}}^{\left( h \right)}}\left( k \right)}}} \right\rangle $ for $h = 1, \ldots ,H$. Note for all $h \in \left[ H \right]$, each entry of ${{\bf{G}}^{\left( h \right)}}$ is an inner product.

In \cite{Du2019}, it has been shown that, if the width is large enough for all layers, for all $k = 0,1, \ldots $, ${{\bf{G}}^H}\left( k \right)$ is close to a fixed matrix ${{\bf{K}}^{\left( H \right)}} \in {\mathbb{R}^{n \times n}}$ which depends on the input data, neural network architecture and the activation but does not depend on neural network parameters $\Theta $. The Gram matrix ${{\bf{K}}^{\left( H \right)}}$ is recursively defined as follows. For $\left( {i,j} \right) \in \left[ n \right] \times \left[ n \right]$ and $h \in \left[ {H - 1} \right]$,
\begin{subequations}
	\begin{align}
		&{\bf{K}}_{ij}^{\left( 0 \right)} = \left\langle {{{\bf{x}}_i},{{\bf{x}}_j}} \right\rangle ,\\
		&{\bf{A}}_{ij}^{\left( h \right)} = \left( {\begin{array}{*{20}{c}}
				{{\bf{K}}_{ii}^{\left( {h - 1} \right)}}&{{\bf{K}}_{ij}^{\left( {h - 1} \right)}}\\
				{{\bf{K}}_{ji}^{\left( {h - 1} \right)}}&{{\bf{K}}_{jj}^{\left( {h - 1} \right)}}
		\end{array}} \right),\\
		&{\bf{K}}_{ij}^{\left( h \right)} = {c_\sigma }{\mathbb{E}_{{{\left( {u,v} \right)}^T}}} \sim \mathcal{N}\left( {{\bf{0}},{\bf{A}}_{ij}^{\left( h \right)}} \right)\left[ {\sigma \left( u \right)\sigma \left( v \right)} \right],\\
		&{\bf{K}}_{ij}^{\left( H \right)} = {c_\sigma }{\bf{K}}_{ij}^{\left( {H - 1} \right)}{\mathbb{E}_{{{\left( {u,v} \right)}^T}}} \sim \mathcal{N}\left( {{\bf{0}},{\bf{A}}_{ij}^{\left( h \right)}} \right)\left[ {\sigma '\left( u \right)\sigma '\left( v \right)} \right].
	\end{align}	
	\label{Gram Matrix K}
\end{subequations}

\begin{figure*}[t]
	\centering
	\includegraphics[width=5.5in]{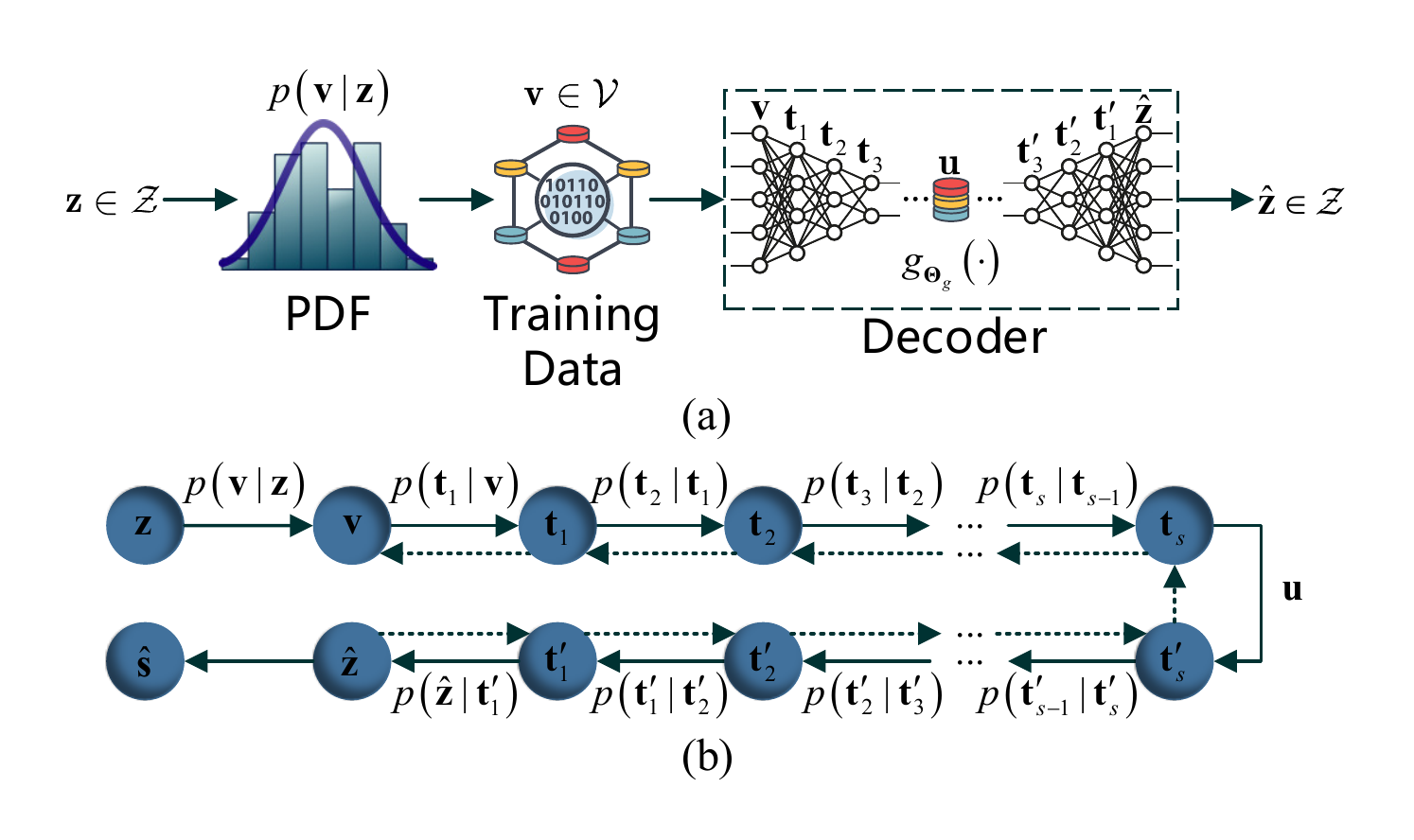}
	\caption{(a) An inference model with a DNN decoder of size $\left( {2S - 1} \right)$ hidden layers for learning. (b) The graph representation of the decoder with $\left( {S - 1} \right)$ hidden layers in both sub encoder and decoder. The solid arrow denotes the direction of input feedforward propagation and the dashed arrow denotes the direction of information flow in the error back-propagation phase.}
	\label{Decoder}
\end{figure*}

However, the existence of the channel layer obstructs this recursive process. Specifically, under the case of Gaussian channel, ${{\bf{W'}}^{\left( {{H_1} + 1} \right)}} = \left[ {{{\bf{I}}_m},{\bf{n}}} \right]$ results that the strictly positive definiteness of matrix ${{\bf{K}}^{\left( H \right)}}$ may not be guaranteed. This leads that the dynamics of gradient descent does not enjoy a linear convergence rate as (\ref{Linear Dynamics}) shown.

Under the case of Rayleigh flat fading channel, ${{\bf{W'}}^{\left( {{H_1} + 1} \right)}} = \left[ {{\bf{H}},{\bf{n}}} \right]$ makes the situation worse. Since the fading channel is not static, the diagonal elements of the weights matrix ${\bf{H}}\left( k \right)$ in the ${{\bf{W'}}^{\left( {{H_1} + 1} \right)}}$ at $k$-th iterative are a random sample from ${{\cal N}_m}\left( {0,{{\bf{I}}_m}} \right)$. At the stage of random initialization, suppose we have some perturbation ${\left\| {{{\bf{G}}^{\left( 1 \right)}}\left( 0 \right) - {{\bf{K}}^{\left( 1 \right)}}} \right\|_2} \le {\varepsilon _1}$ in the first layer. This perturbation propagates to the $H$-th layer admits the form
\begin{equation}
	{\left\| {{{\bf{G}}^{\left( H \right)}}\left( 0 \right) - {{\bf{K}}^{\left( H \right)}}} \right\|_2} \buildrel \Delta \over = {\varepsilon _H}\mathop  < \limits_{\sim} {2^{O\left( H \right)}}{\varepsilon _1}.
\end{equation}

Therefore, we need to have ${\varepsilon _1} \le 1/{2^{O\left( H \right)}}$ and this makes $m$ have exponential dependency on $H$  \cite{Du2019}. Moreover, at the training stage, the perturbation in the $\left( {{H_1}{\rm{ + 1}}} \right)$-th layer induced by fading channel can disperse to the whole network, i,e., the averaged Frobenius norm 
\begin{equation}
	\frac{1}{{\sqrt m }}{\left\| {{{\bf{W}}^{\left( h \right)}}\left( k \right) - {{\bf{W}}^{\left( h \right)}}\left( 0 \right)} \right\|_F}
\end{equation} 
is not small for all $k = 0,1, \ldots $.

In addition, large biases $\bf{n}$ would be introduced by the channel noise when SNR is low. Although the biases do not impact the weights directly, its influence can be spread to the whole network through forward and backward propagation.

\section{Decoder: Inference}
\label{Section Decoder}
In this section, we will zoom in the lower right corner of the Fig. \ref{CommunicationsSystems} to investigate what happens inside the decoder (receiver). As Fig. \ref{Decoder}(a) shows, for the task of DNN-based channel estimate, the problem can be formulated as an inference model. For the sake of convenience, we can denote the target output of the decoder as $\bf{z}$ instead of $\bf{s}$ because we can assume  ${\bf{z}} = {f_{{{\bf{\Theta }}_f}}}\left( {\bf{s}} \right)$ is bijection. If the decoder is symmetric, the decoder also can be seen as a sub AE which consists of a sub encoder and decoder. Its bottleneck (or middlemost) layer codes is denoted as $\bf{u}$. Here we use $\bf{z}$ to denote CSI or transmitted symbol which we desire to predict. The decoder infers a prediction ${\bf{\hat z}} = {g_{{{\bf{\Theta }}_g}}}\left( {\bf{v}} \right)$ according to its corresponding measurable variable $\bf{v}$. This inference problem can be addressed by the following two different methods.
\subsection{Traditional Estimation Method}
\subsubsection{LS Estimator}
Without loss of generality, we can consider this issue under the context of complex channel estimation. Let the measurable variable be 
\begin{equation}
	{\bf{v}} = {{{\bf{\hat h}}}_{{\rm{LS}}}} = {\bf{h}} + {\bf{n}}
	\label{LS Model}
\end{equation}
where ${{{\bf{\hat h}}}_{{\rm{LS}}}}$ denotes the LS estimation of its corresponding true channel responses ${\bf{h}} \in {\mathbb{C}^d}$, and ${\bf{n}} \in {\mathbb{C}^d}$ is a vector of \textit{i.i.d.} complex zero-mean Gaussian noise with variance $\sigma _n^2$. The noise $\bf{n}$ is assumed to be uncorrelated with the channel $\bf{h}$. The corresponding MSE is 
\begin{equation}
	{\rm{MS}}{{\rm{E}}_{{\rm{LS}}}} = {\mathbb{E}}\left[ {\left\| {{\bf{h}} - {{\bf{h}}_{{\rm{LS}}}}} \right\|_2^2} \right] = d\sigma _n^2.
	\label{MSE_LS}
\end{equation}
\subsubsection{LMMSE Estimator}
If the channel is stationary and subject to ${\bf{h}} \sim {\mathcal{N}_{\mathcal{C}}}\left( {{\bf{0}},{{\bf{R}}_{{\bf{hh}}}}} \right)$, its LMMSE estimate can be expressed as 
\begin{equation}
	{{{\bf{\hat h}}}_{{\rm{LMMSE}}}} = {{\bf{R}}_{{\bf{hh}}}}{\left( {{{\bf{R}}_{{\bf{hh}}}} + \sigma _n^2{{\left( {{\bf{X}}{{\bf{X}}^H}} \right)}^{ - 1}}} \right)^{ - 1}}{{{\bf{\hat h}}}_{{\rm{LS}}}}
	\label{LMMSE Channel Estimation}
\end{equation}
where ${{\bf{R}}_{{\bf{hh}}}} = E\left\{ {{\bf{h}}{{\bf{h}}^H}} \right\}$ is the channel autocorrelation matrix and ${\bf{X}}$ is a diagonal matrix containing the known transmitted signaling points \cite{Beek1995,Edfors1998,Mei2021a}. The MSE of the LMMSE  estimate is 
\begin{equation}
	\begin{aligned}
		{\rm{MS}}{{\rm{E}}_{{\rm{LMMSE}}}} &= \mathbb{E}\left[ {\left\| {{\bf{h}} - {{\bf{h}}_{{\rm{LMMSE}}}}} \right\|_2^2} \right]\\
		&={\rm{tr}}\left[ {{{\bf{R}}_{{\bf{hh}}}}{{\left( {{{\bf{I}}_d} + \frac{1}{{\sigma _n^2}}{{\bf{R}}_{{\bf{hh}}}}} \right)}^{ - 1}}} \right] \le {\rm{MS}}{{\rm{E}}_{{\rm{LS}}}}.
	\end{aligned}
	\label{MSE_LMMSE}
\end{equation}
To perform (\ref{LMMSE Channel Estimation}), ${{{\bf{R}}_{{\bf{hh}}}}}$ and ${\sigma _n^2}$ are assumed to be known.
\begin{figure}[t]
	\centering
	\includegraphics[width=3.50in]{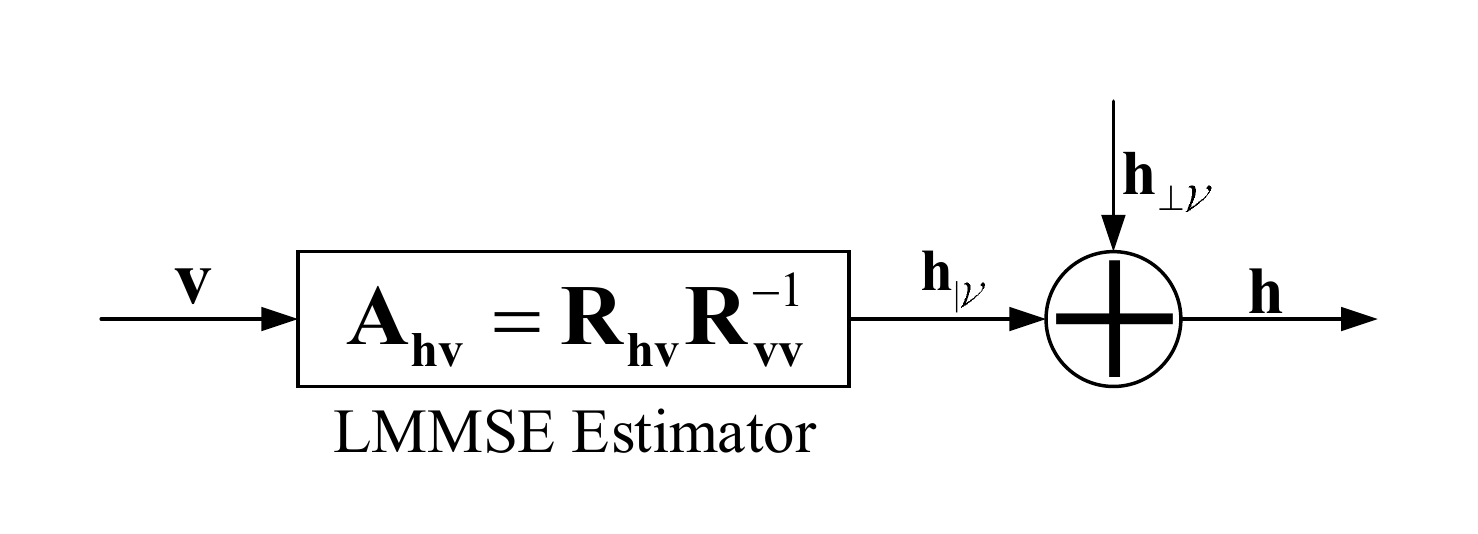}
	\caption{The block diagram of the LMMSE estimator.}
	\label{LMMSE Estimator}
\end{figure}
We define a complex space $\cal{G}$. Every element of $\cal{G}$ is a finite-variance, zero-mean, proper complex Gaussian random variable, and every subset of $\cal{G}$ is jointly Gaussian. The set of observed variables and its closure or the subspace generated by ${\cal{V}}$ are denoted as $\cal{V}\in\cal{G}$ and ${\bar{\cal{V}}}$, respectively. Let  ${\bf{e}} = {\bf{h}} - {\bf{\hat h}}$ be the estimation error where ${\bf{\hat h}} \in \bar {\cal{V}}$ is a linear estimate of $\bf{h}$. By the projection and Pythagorean theorems, we have 
\begin{equation}
	\begin{aligned}
		{\left\| {\bf{e}} \right\|^2} &= {\left\| {{{\bf{h}}_{|\cal V}} + {{\bf{h}}_{ \bot \cal V}} - {\bf{\hat h}}} \right\|^2} = {\left\| {{{\bf{h}}_{|\cal V}} - {\bf{\hat h}}} \right\|^2} + {\left\| {{{\bf{h}}_{ \bot \cal V}}} \right\|^2}\\
		& \ge {\left\| {{{\bf{h}}_{ \bot \cal V}}} \right\|^2}
	\end{aligned},
\end{equation}
with equality if and only if ${\bf{\hat h}} = {{\bf{h}}_{|\cal V}}$. For this reason, ${{{\bf{h}}_{|\cal V}}}$ is called the LMMSE estimate of $\bf{h}$ given $\bf{v}$, and ${{{\bf{h}}_{ \bot \cal V}}}$ is called the MMSE estimation error. Moreover, the orthogonality principle holds: ${\bf{\hat h}} \in \bar {\cal V}$ is the LMMSE estimate of $\bf{h}$ given $\bf{v}$ if and only if ${\bf{h}} - {\bf{\hat h}}$ is orthogonal to $\bar {\cal {V}}$. Writing ${{{\bf{\hat h}}}_{{\rm{LMMSE}}}}$ as a set of linear combinations of the elements of $\bf{v}$ in matrix form, namely ${{{\bf{\hat h}}}_{{\rm{LMMSE}}}} = {{\bf{A}}_{{\bf{hv}}}}{\bf{v}}$, we obtain a unique solution ${{\bf{A}}_{{\bf{hv}}}} = {{\bf{R}}_{{\bf{hv}}}}{\bf{R}}_{{\bf{vv}}}^{ - 1}$. Fig. \ref{LMMSE Estimator} illustrates that $\bf{h}$ can be decomposed into a linear estimate derived from $\bf{v}$ and an independent error variable ${\bf{e}} = {{\bf{h}}_{ \bot {\cal V}}}$. Moreover, this block diagram implies that the MMSE estimate ${{\bf{h}}_{|\cal V}}$ is a sufficient statistic for estimation of $\bf{h}$ from $\bf{v}$, since ${\bf{v}} - {{\bf{h}}_{|v}} - {\bf{h}}$ is evidently a Markov chain; i.e., $\bf{v}$ and $\bf{h}$ are conditionally independent given ${{\bf{h}}_{|\cal V}}$. This is also known as the sufficiency property of the MMSE estimate \cite{forney2004shannon}.

By the sufficiency property, the MMSE estimate ${{\bf{h}}_{|\cal V}}$ is a function of $\bf{v}$ that satisfies the data processing inequality of information theory with equality: $I\left( {{\bf{h}};{\bf{v}}} \right) = I\left( {{\bf{h}};{{\bf{h}}_{|\cal V}}} \right)$ \footnote{Note that no confusion should arise if we abuse the notation slightly by using a lower-case letter to denote a random variable.}. In other words, the reduction of $\bf{v}$ to ${{\bf{h}}_{|\cal V}}$ is information-lossless. Since ${\bf{h}} = {{\bf{A}}_{{\bf{hv}}}}{\bf{v}} + {{\bf{h}}_{ \bot {\cal V}}}$ is a linear Gaussian channel model with Gaussian input $\bf{v}$, Gaussian output $\bf{h}$, and independent additive Gaussian noise ${\bf{e}} = {{\bf{h}}_{ \bot {\cal V}}}$, we have
\begin{equation}
	I\left( {{\bf{h}};{\bf{v}}} \right) = h\left( {\bf{h}} \right) - h\left( {{\bf{h}}|{\bf{v}}} \right) = h\left( {\bf{h}} \right) - h\left( {\bf{e}} \right) = \log \frac{{\left| {{{\bf{R}}_{{\bf{hh}}}}} \right|}}{{\left| {{{\bf{R}}_{{\bf{ee}}}}} \right|}},
	\label{Information Theory_MMSE}
\end{equation}
where
\begin{equation}
	\begin{aligned}
		{{\bf{R}}_{{\bf{ee}}}} &= \mathbb{E}\left[ {\left( {{\bf{h}} - {{{\bf{\hat h}}}_{{\bf{LMMSE}}}}} \right){{\left( {{\bf{h}} - {{{\bf{\hat h}}}_{{\bf{LMMSE}}}}} \right)}^H}} \right]\\
		& = \mathbb{E}\left[ {\left( {{\bf{h}} - {{{\bf{\hat h}}}_{{\bf{LMMSE}}}}} \right){{\bf{h}}^H}} \right] + \mathbb{E}\left[ {\left( {{\bf{h}} - {{{\bf{\hat h}}}_{{\bf{LMMSE}}}}} \right){{\bf{h}}^H}{\bf{A}}_{{\bf{hv}}}^H} \right]\\
		& = \mathbb{E}\left[ {\left( {{\bf{h}} - {{{\bf{\hat h}}}_{{\bf{LMMSE}}}}} \right){{\bf{h}}^H}} \right]\\
		&={{\bf{R}}_{{\bf{hh}}}} - {{\bf{A}}_{{\bf{hv}}}}{{\bf{R}}_{{\bf{vh}}}}.
	\end{aligned}
\label{Error Covariance Matrix}
\end{equation}

\subsection{DNN-based Estimation}
Let ${\cal H}_{k,{l}}^\sigma $ be the hypothesis class associated with a $l$-layer neural network of size $k$ with activation function $\sigma$. We assume that $\left| {\sigma \left( z \right)} \right| \le 1$ for all $z \in \mathbb R$. More specifically, ${\cal H}_{k,{l}}^\sigma $ is the set of all functions $h:{{\mathbb R}^d} \to {{\mathbb R}^d}$. Given a random sample $D_n= \left\{ {\left( {{{\bf{v}}_1},{{\bf{z}}_1}} \right), \ldots ,\left( {{{\bf{v}}_n},{{\bf{z}}_n}} \right)} \right\}$, we define 
\begin{equation}
	{\rm{MMS}}{{\rm{E}}_{k,l,n}}\left( {{\bf{z}}|{\bf{v}}} \right): = \mathop {\inf }\limits_{h \in H_{k,l}^\sigma } \frac{1}{n}\sum\limits_{i = 1}^n {\left\| {{{\bf{z}}_i} - h\left( {{{\bf{v}}_i}} \right)} \right\|_2^2},
\end{equation}
i.e., ${\rm{MMS}}{{\rm{E}}_{k,l,n}}\left( {{\bf{z}}|{\bf{v}}} \right)$ is the minimum empirical square loss attained by a $l$-layer neural network of size $k$. Mario Diaz \emph{et al.} establishes a probabilistic bound for the MMSE in estimating ${\bf{z}} \in {\mathbb R^1}$ given ${\bf{v}} \in {\mathbb R^d}$ based on the 2-layer estimator ${\rm{MMS}}{{\rm{E}}_{k,2,n}}\left( {{\bf{z}}|{\bf{v}}} \right)$, and the Barron constant of the conditional expectation of $\bf{z}$ given $\bf{v}$ \cite[Theorem 1]{diaz2021lower}. We extend this probabilistic bound to explore the MSE performance of DNN-based estimators.

\begin{assumption}
	\emph{Let $k,l,n \in \mathbb{N}$ and $B \in \mathbb{R}^d$ be a bounded set containing $0$. If each entry of $\bf{z}$ belongs to $\left[-1,1\right]$, $\bf{v}$ is supported on $B$, and the conditional expectation $\eta \left( {\bf{v}} \right): = {\mathbb E}\left[ {{\bf{z}}|{\bf{v}}} \right]$ belongs to ${\Gamma _B}$, the set of all function $h:B \to {\mathbb{R}^d}$, then, with probability at least $1-\delta$,}
	\begin{equation}
		{\rm{MMS}}{{\rm{E}}_{k,l,n}}\left( {{\bf{z}}|{\bf{v}}} \right) - {\varepsilon _{k,l,n,\delta }} \le {\rm{MMSE}}\left( {{\bf{z}}|{\bf{v}}} \right)
	\end{equation}
\emph{where}
\begin{equation}
\begin{aligned}
	{\rm{MMSE}}\left( {{\bf{z}}|{\bf{v}}} \right):&=\mathop {\inf }\limits_{h~{\rm{meas}}{\rm{.}}} \mathbb{E}\left[ {{{\left( {{\bf{z}} - h\left( {\bf{v}} \right)} \right)}^2}} \right]\\
	& = \mathbb{E}\left[ {{{\left( {{\bf{z}} - \eta \left( {\bf{v}} \right)} \right)}^2}} \right].
\end{aligned}
\end{equation}
\label{MMSE Bound}
\end{assumption}

\begin{theorem}
	\emph{Under {\bf{Assumption \ref{MMSE Bound}}}, for a linear estimation problem, given a limit $k,n$, and a specific training set $D_n$, if $k \geq d$, and ${\rm{MMS}}{{\rm{E}}_{k,2,n}}\left( {{\bf{z}}|{\bf{v}}} \right) \ne {\rm{MMSE}}\left( {{\bf{z}}|{\bf{v}}} \right)$}, then
		\begin{equation}
		{\varepsilon _{k,2,n,\delta }} \le {\varepsilon _{k,3,n,\delta }} \le  \ldots  \le {\varepsilon _{k,l,n,\delta }}.
	\end{equation}
\label{David_Bound}
\end{theorem}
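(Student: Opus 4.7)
The plan is to reduce the monotonicity of $\varepsilon_{k,l,n,\delta}$ in $l$ to the monotonicity of the underlying hypothesis classes together with the explicit structure of the bound borrowed from \cite{diaz2021lower}.

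First I would establish a nesting $\mathcal H_{k,l}^\sigma \subseteq \mathcal H_{k,l+1}^\sigma$ for every $l\ge 2$, so that the classes are monotone in depth. Any depth-$l$ network can be viewed as a depth-$(l+1)$ network by inserting a layer that implements the identity map; for the softplus activation $\sigma(z)=\log(1+e^z)$ fixed in Section \ref{Section Encoder}, the identity admits the explicit expression $\sigma(z)-\sigma(-z)=z$, which lets the auxiliary layer route signals through unchanged. The over-parameterization hypothesis $k\ge d$ then ensures that enough idle neurons are available to carry out this routing without enlarging $k$. A direct consequence I would record in passing is ${\rm MMSE}_{k,l+1,n}({\bf z}|{\bf v})\le {\rm MMSE}_{k,l,n}({\bf z}|{\bf v})$.

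Next I would expand $\varepsilon_{k,l,n,\delta}$ according to the proof of Theorem~1 in \cite{diaz2021lower}. It splits into an approximation part depending on the Barron-type norm of $\eta({\bf v}) = \mathbb{E}[{\bf z}|{\bf v}]$ but not on $l$, plus an estimation part which is a monotone function of the empirical Rademacher (or covering-number) complexity of $\mathcal H_{k,l}^\sigma$ on the sample $D_n$, together with concentration terms involving only $n$ and $\delta$. Because the problem is linear, $\eta$ is itself linear and its Barron constant is finite and independent of $l$, so the approximation part is constant in $l$. Combined with the first step and the monotonicity of empirical Rademacher complexity under class inclusion, the estimation part is non-decreasing in $l$. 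Chaining the resulting one-step inequalities delivers the full string claimed in the theorem.

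The non-degeneracy clause ${\rm MMSE}_{k,2,n}({\bf z}|{\bf v}) \ne {\rm MMSE}({\bf z}|{\bf v})$ is there to rule out the case in which the $l=2$ bound is already saturated and all $\varepsilon_{k,l,n,\delta}$ collapse to zero; only under non-degeneracy do the inequalities carry content. The main obstacle lies in step one: the identity emulation via $\sigma(z)-\sigma(-z)=z$ nominally doubles the per-coordinate width of the inserted layer, so one has to verify that the budget $k\ge d$ leaves enough idle neurons along each coordinate to host the auxiliary construction, or else settle for an approximate inclusion whose residual error is absorbed into the approximation term of $\varepsilon$. I would treat this by exploiting the block-diagonal structure of the channel-layer weights in Section \ref{Section Encoder}, which makes it explicit that at most $d$ out of $k$ coordinates ever carry signal and frees up the remaining width. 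A secondary concern is to make sure that the specific complexity quantity used in \cite{diaz2021lower} is genuinely monotone under class inclusion rather than merely upper-bounded by a monotone proxy, but since empirical Rademacher complexity is defined as a supremum over the class, this is immediate.
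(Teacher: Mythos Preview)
Your approach is genuinely different from the paper's. The paper's proof is a short information-theoretic argument: from ${\rm MMSE}_{k,2,n}({\bf z}\mid{\bf v}) \ne {\rm MMSE}({\bf z}\mid{\bf v})$ it concludes that the trained two-layer output $h_{k,2}^{*}({\bf v})$ is not a sufficient statistic for ${\bf z}$, then invokes the data processing inequality along the chain of trained networks to obtain
\[
I\bigl({\bf z};h_{k,2}^{*}({\bf v})\bigr)\ \ge\ I\bigl({\bf z};h_{k,3}^{*}({\bf v})\bigr)\ \ge\ \cdots\ \ge\ I\bigl({\bf z};h_{k,l}^{*}({\bf v})\bigr),
\]
and reads the ordering of the $\varepsilon_{k,l,n,\delta}$ off from this chain directly. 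There is no decomposition of $\varepsilon$, no Rademacher complexity, and no hypothesis-class nesting; the entire argument rests on viewing $h_{k,l}^{*}$ as a post-processing of $h_{k,l-1}^{*}$ and the attendant information loss.

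Your route is learning-theoretic rather than information-theoretic: you treat deeper classes as \emph{supersets} of shallower ones (via an identity-layer embedding using $\sigma(z)-\sigma(-z)=z$) and push the monotonicity through the approximation/estimation split of $\varepsilon$ borrowed from \cite{diaz2021lower}. This makes the roles of $k$, $n$, $\delta$ and the linearity hypothesis explicit, at the price of the width-budgeting issue you already flagged and of having to check that the Diaz et~al.\ decomposition carries over verbatim to depth $l>2$. The paper's DPI argument is indifferent to those structural details but leans on the \emph{opposite} picture --- depth as degradation rather than enlargement --- which is where the Markov-chain reading of successive layers enters. Either viewpoint yields the claimed ordering, but the mechanisms are essentially disjoint.
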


\begin{proof}
According to {\bf{Assumption \ref{MMSE Bound}}}, for a 2-layer neural network,	with probability at least $1-\delta$,
\begin{equation}
	{\rm{MMS}}{{\rm{E}}_{k,2,n}}\left( {{\bf{z}}|{\bf{v}}} \right) - {\varepsilon _{k,2,n,\delta }} < {\rm{MMSE}}\left( {{\bf{z}}|{\bf{v}}} \right).
\end{equation}
If ${\rm{MMS}}{{\rm{E}}_{k,2,n}}\left( {{\bf{z}}|{\bf{v}}} \right) \ne {\rm{MMSE}}\left( {{\bf{z}}|{\bf{v}}} \right)$, the output of the trained 2-layer neural network $h_{k,2}^ * \left( {\bf{v}} \right)$ is not a \emph{sufficient statistic} for estimation of $\bf{z}$ from $\bf{v}$. By the data processing inequality, we have 
\begin{equation}
	I\left({\bf{z}};h_{k,2}^ * \left( {\bf{v}} \right)\right) \geq I\left({\bf{z}};h_{k,3}^ * \left( {\bf{v}} \right)\right) \geq ... \geq I\left({\bf{z}};h_{k,l}^ * \left( {\bf{v}} \right)\right),
\end{equation}
and
\begin{equation}
	{\varepsilon _{k,2,n,\delta }} \le {\varepsilon _{k,3,n,\delta }} \le  \ldots  \le {\varepsilon _{k,l,n,\delta }}.
\end{equation}
\end{proof}

\begin{figure*}[ht]
	\centering
	\subfigure[SNR=0~dB]{              
		\includegraphics[width=3.50in]{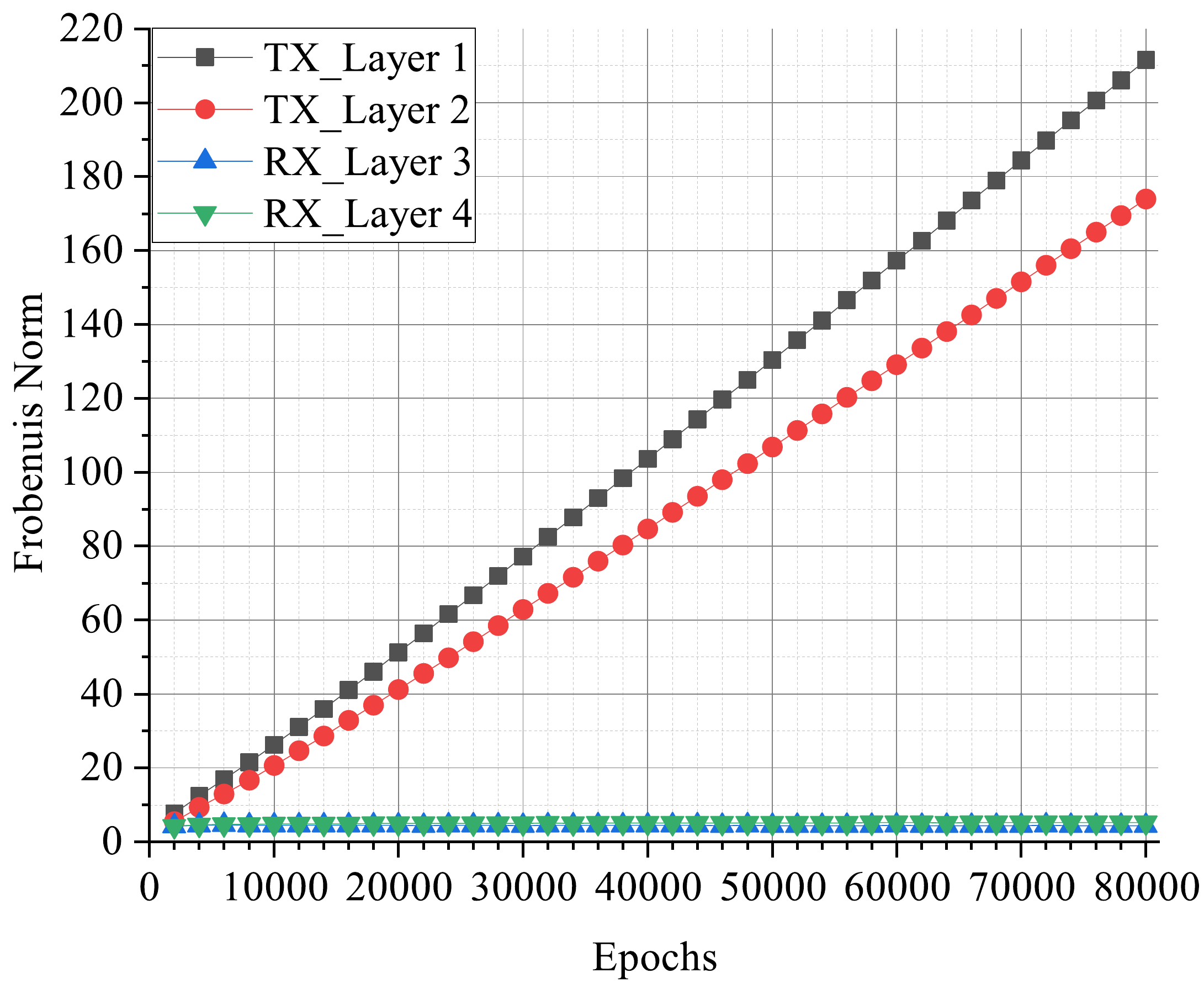}}
	\subfigure[SNR=25~dB]{
		\includegraphics[width=3.50in]{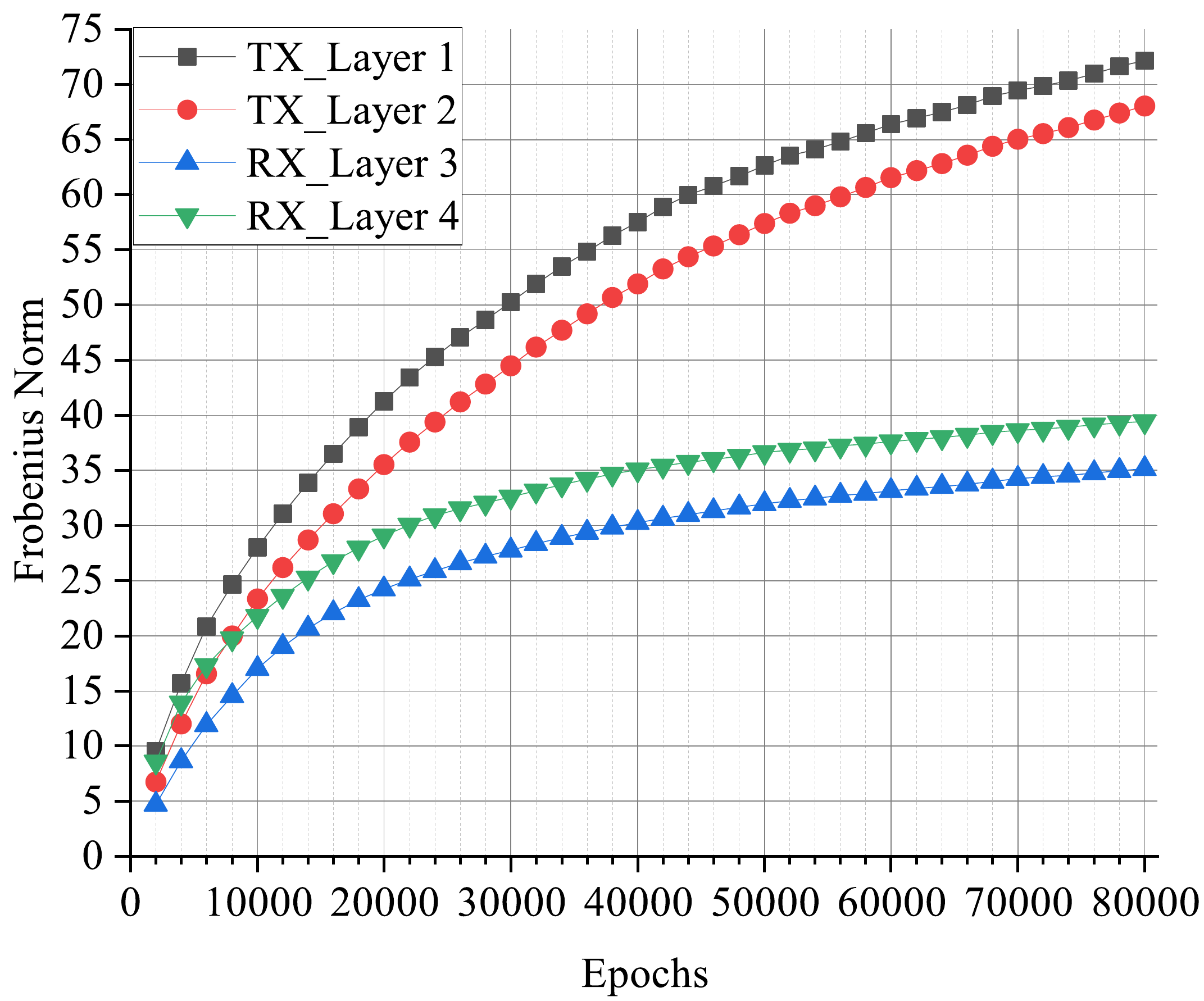}}
	\caption{The Frobenius norm of each layer of a AE-based communication system ${\left\| {{{\bf{W}}^{\left( h \right)}}\left( k \right)} \right\|_F}$ versus epochs under Gaussian channel.}
	\label{FNorm_Gaussian}
\end{figure*}

\subsection{Information Flow in Neural Networks}
If the joint probability distribution $p\left( {{\bf{v}},{\bf{z}}} \right)$ is known, the expected (population) risk ${{\cal C}_{p\left( {{\bf{v}},{\bf{z}}} \right)}}\left( {{g_{{{\bf{\Theta }}_g}}},{{\cal L}_{\log }}} \right)$ can be written as
\begin{equation}
	\begin{aligned}
		{\mathbb{E}}\left[ {{{\cal L}_{\log }}\left( {{\bf{\hat z}},{\bf{z}};{{\bf{\Theta }}_g}} \right)} \right] &= \sum\limits_{{\bf{v}} \in \mathcal V,{\bf{z}} \in \mathcal Z} {p\left( {{\bf{v}},{\bf{z}}} \right)\log \left( {\frac{1}{{Q\left( {{\bf{z}}|{\bf{v}}} \right)}}} \right)}\\
		&=\sum\limits_{{\bf{v}} \in \mathcal V,{\bf{z}} \in \mathcal Z} {p\left( {{\bf{v}},{\bf{z}}} \right)\log \left( {\frac{1}{{p\left( {{\bf{z}}|{\bf{v}}} \right)}}} \right)} {\rm{ + }}\\
		&~~~\sum\limits_{{\bf{v}} \in \mathcal V,{\bf{z}} \in \mathcal Z} {p\left( {{\bf{v}},{\bf{z}}} \right)\log \left( {\frac{{p\left( {{\bf{z}}|{\bf{v}}} \right)}}{{Q\left( {{\bf{z}}|{\bf{v}}} \right)}}} \right)}\\
		&=H\left( {{\bf{z}}|{\bf{v}}} \right) + {D_{{\rm{KL}}}}\left( {p\left( {{\bf{z}}|{\bf{v}}} \right)||Q\left( {{\bf{z}}|{\bf{v}}} \right)} \right)\\
		& \ge H\left( {{\bf{z}}|{\bf{v}}} \right) 
	\end{aligned}
\label{PopulationRisk}
\end{equation}
where $Q\left( { \cdot |{\bf{v}}} \right){\rm{ = }}{g_{{{\bf{\Theta }}_g}}}\left( {\bf{v}} \right) \in p\left( {\mathcal{Z}} \right)$ and ${D_{{\rm{KL}}}}\left( {p\left( {{\bf{z}}|{\bf{v}}} \right)||Q\left( {{\bf{z}}|{\bf{v}}} \right)} \right)$ denotes Kullback-Leibler divergence between ${p\left( {{\bf{z}}|{\bf{v}}} \right)}$ and ${Q\left( {{\bf{z}}|{\bf{v}}} \right)}$ \cite{zaidi2020information}\footnote{If $\bf{z}$ and $\bf{v}$ are continuous random variables, the sum becomes an integral when their PDFs exist.}. If and only if the decoder is given by the conditional posterior ${g_{{{\bf{\Theta }}_g}}}\left( {\bf{v}} \right){\rm{ = }}p\left( {{\bf{z}}|{\bf{v}}} \right)$, the expected (population) risk reaches the minimum $\mathop {\min }\limits_{{g_{{{\bf{\Theta }}_g}}}} {{\cal C}_{p\left( {{\bf{v}},{\bf{z}}} \right)}}\left( {{g_{{{\bf{\Theta }}_g}}},{{\cal L}_{\log }}} \right) = H\left( {{\bf{z}}|{\bf{v}}} \right)$.

In physical layer communication, instead of perfectly knowing the channel-related joint probability distribution $p\left( {{\bf{v}},{\bf{z}}} \right)$, we only have a set of $n$ \textit{i.i.d.} samples ${D_n}: = \left\{ {\left( {{{\bf{v}}_i},{{\bf{z}}_i}} \right)} \right\}_{i = 1}^n$ from $p\left( {{\bf{v}},{\bf{z}}} \right)$. In this case, the empirical risk is defined as
\begin{equation}
	{\hat {\cal C}_{p\left( {{\bf{v}},{\bf{z}}} \right)}}\left( {{g_{{{\bf{\Theta }}_g}}},{\cal L},{{\cal D}_n}} \right) = \frac{1}{n}\sum\limits_{i = 1}^n {{\cal L}\left[ {{{\bf{z}}_i},{g_{{{\bf{\Theta }}_g}}}\left( {{{\bf{v}}_i}} \right)} \right]}.
\end{equation}
Practically, the ${\mathcal D}_n$ from $p\left( {{\bf{v}},{\bf{z}}} \right)$ usually is a finite set. This leads the difference between the empirical and expected (population) risks which can be defined as
\begin{equation}
	\begin{aligned}
		{\rm{ge}}{{\rm{n}}_{p\left( {{\bf{v}},{\bf{z}}} \right)}}\left( {{g_{{{\bf{\Theta }}_g}}},{\cal L},{{\cal D}_n}} \right)=&{{\cal C}_{p\left( {{\bf{v}},{\bf{z}}} \right)}}\left( {{g_{{{\bf{\Theta }}_g}}},{{\cal L}_{\log }}} \right)-\\
		&{\hat {\cal C}_{p\left( {{\bf{v}},{\bf{z}}} \right)}}\left( {{g_{{{\bf{\Theta }}_g}}},{\cal L},{{\cal D}_n}} \right).
	\end{aligned}
\end{equation}

We now can preliminarily conclude that the DNN-based receiver is an estimator with minimum empirical risk for a given set ${\mathcal D}_n$, whereas its performance is inferior to the optimal with minimum expected (population) risk under a given joint probability distribution $p\left( {{\bf{v}},{\bf{z}}} \right)$.

Furthermore, it is necessary to quantitatively understand how information flows inside the decoder. Fig. \ref{Decoder}(b) shows the graph representation of the decoder where ${{{\bf{t}}_i}}$ and ${{{\bf{t'}}}_i}\left( {1 \le i \le S} \right)$ denote $i$-th hidden layer representations starting from the input layer and the output layer, respectively. Usually, the Shannon's entropy cannot be calculated directly since the exact joint probability distribution of two variables is difficult to acquire. Therefore, we use the method proposed in \cite{yu2019understanding} to illustrate layer-wise mutual information by three kinds of information planes (IPs) where the Shannon’s entropy is estimated by matrix-based functional of Rényi’s $\alpha$-entropy \cite{giraldo2014measures}. Its details are given in Appendix.

\section{Simulation Results}
\label{Simulation Results}
In this section, we provide simulation results to illustrate the behaviour of DNN in physical layer communication.
\subsection{Constellation and Convergence of AE-based Communication System}
\subsubsection{Gaussian Channel}
\begin{figure}[t]
	\centering
	\includegraphics[height=0.40\textwidth]{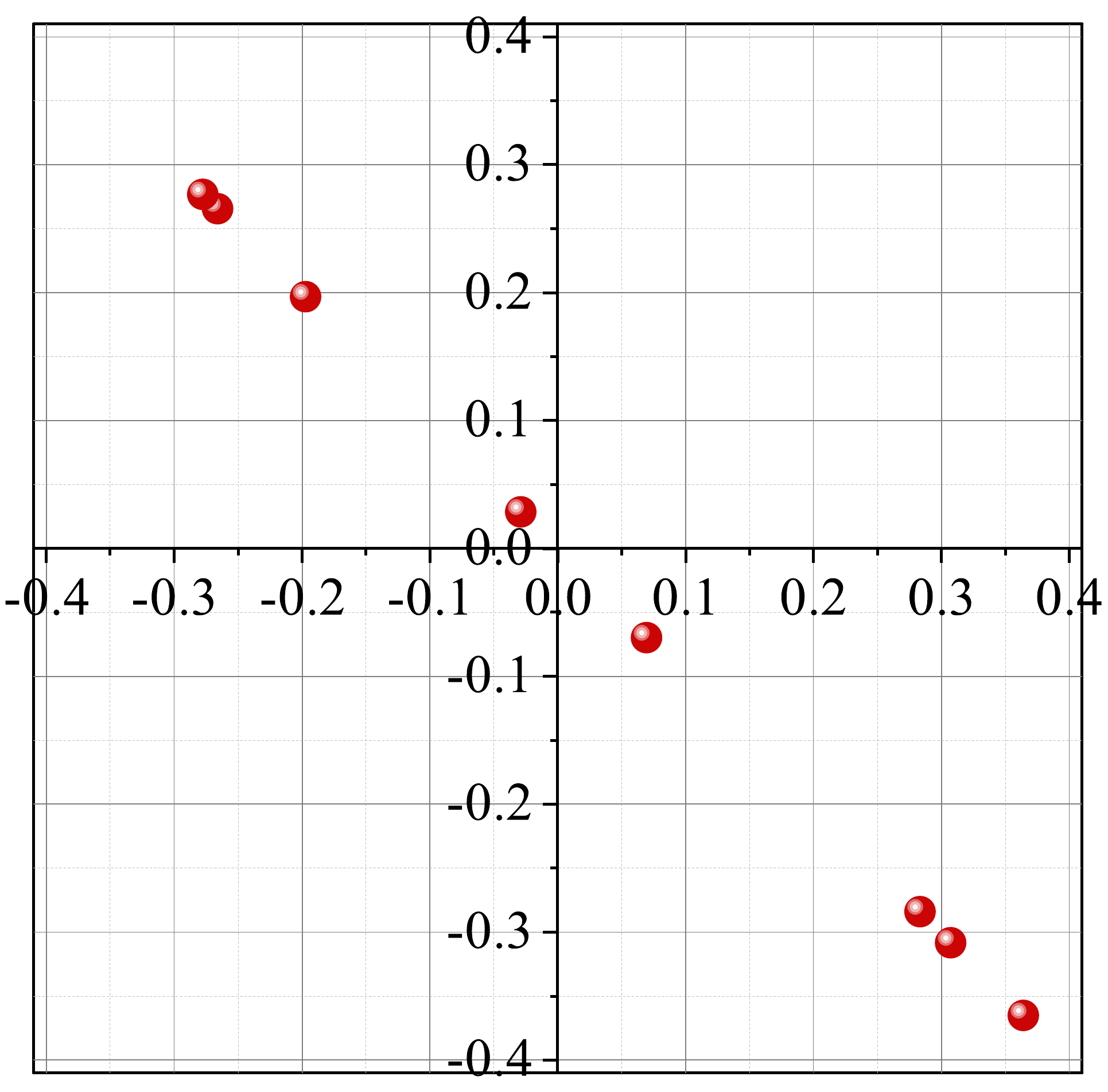}
	\caption{Constellation produced by AE for $d=2$ and $M=8$ under Gaussian channel (SNR = 0 dB).}
	\label{AE2D8_0dB}
\end{figure}

\begin{table}[t]
	\centering
	\caption{Layout of the autoencoder}
	\begin{tabular}{l|c}
		Layer         & Output dimensions \\ \hline
		Input         & $M$                 \\
		Dense+ReLU    & $M$                 \\
		Dense+linear  & $d$                 \\
		Normalization & $d$                 \\ \hline
		Channel         & $d$                 \\ \hline
		Dense+ReLU    & $M$                 \\
		Dense+softmax & $M$                
	\end{tabular}
	\label{AEsLayout}
\end{table}

Fig. \ref{FNorm_Gaussian}(a) and Fig. \ref{FNorm_Gaussian}(b) visualize the Frobenius norm of each layer of a AE-based communication system for $d=2$ and $M=8$ versus epochs under Gaussian channel. The Layout of the AEs are provided in Table \ref{AEsLayout}. When SNR=0~dB, the Frobenius norms of layers at the transmitter side increase with the epoch, and that at the receiver side are kept in small values and do not change significantly. In contrast to the case of low SNR, the Frobenius norms of all the layers close to convergence after $4 \times10^4$ epochs when SNR=25~dB. This phenomenon can be explained by our analysis in Section \ref{Section Encoder}. If SNR is low, large biases would be introduced into the channel layer by noise since ${{{\bf{W'}}}^{\left( {{H_1} + 1} \right)}} = \left[ {{{\bf{I}}_m},{\bf{n}}} \right]$. This leads to the \emph{exploding gradient problem} at the transmitter side. At the receiver side, the Frobenuis norm of each layer tends to be small to counter the large noise, but it produces a very little effect. From Fig. \ref{AE2D8_0dB}, it can be seen that all the signal points are on a line, and two of them are almost overlapped. A number of simulations have been conducted, and the constellations generated by the AE with different initial parameters keep in the same pattern that all constellation points are in a straight line.
\begin{figure}[t]
	\centering
	\subfigure[]{
		\includegraphics[width=0.48\textwidth]{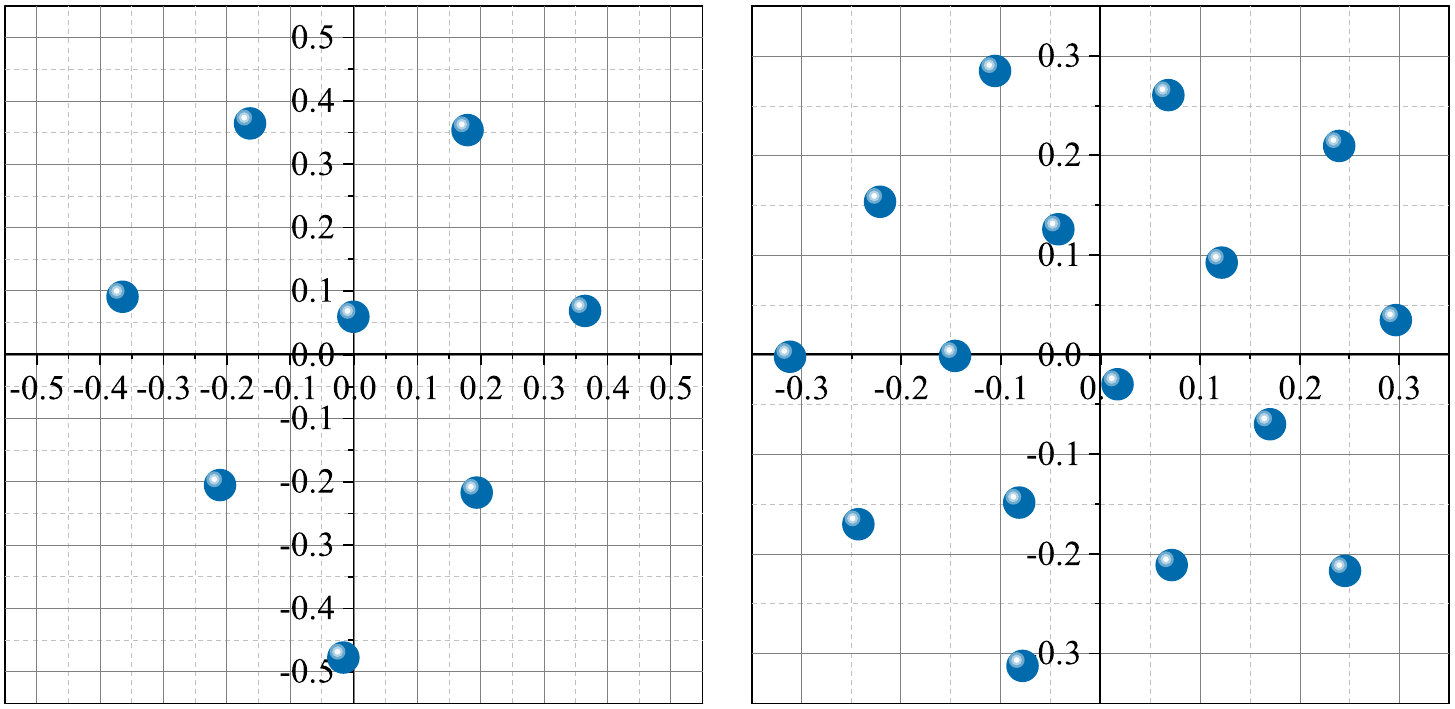}
	}
	\quad
	\subfigure[]{
		\includegraphics[width=0.48\textwidth]{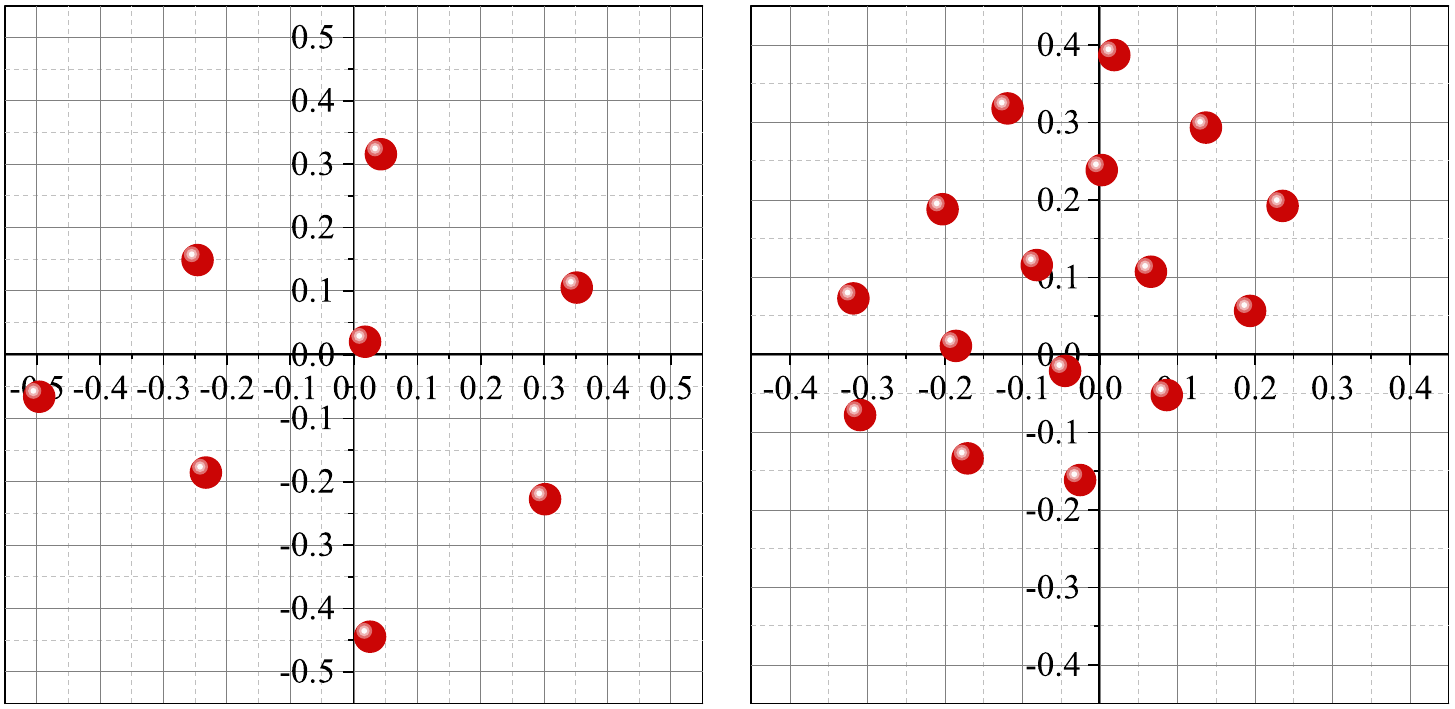}
	}
	\caption{Comparisons of (a) optimum constellation obtained by gradient-search technique and (b) constellation produced by AEs for $d=2$ and $M=8~\rm{or}~16$.}
	\label{Constellations2D}
\end{figure}

\begin{figure}[t]
	\centering
	\subfigure[]{
		\includegraphics[width=0.20\textwidth]{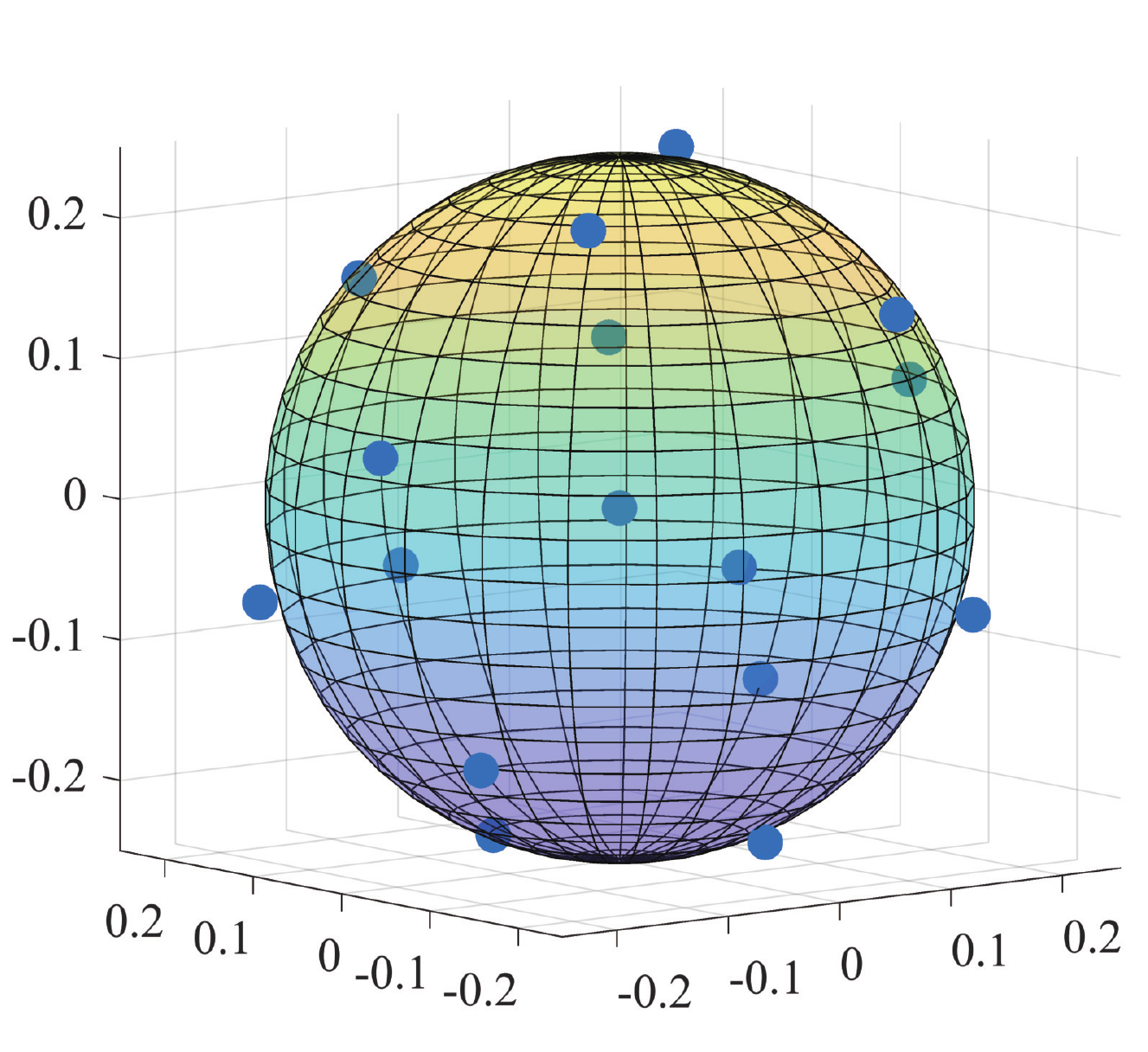}
	}\subfigure[]{
		\includegraphics[width=0.25\textwidth]{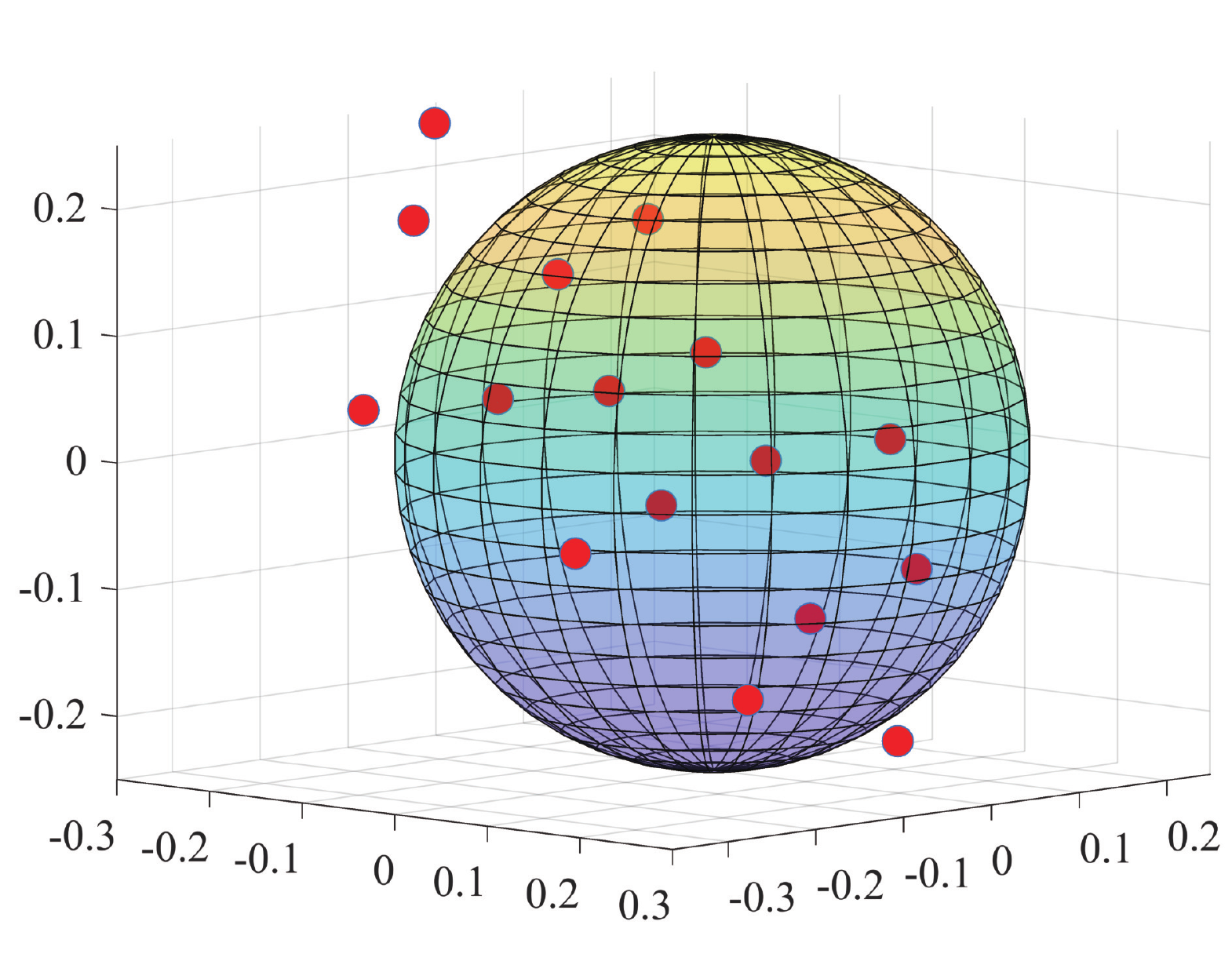}
	}
	\caption{Comparisons of (a) optimum constellation obtained by gradient-search technique and (b) constellation produced by AEs for $d=3$ and $M=16$.}
	\label{Constellations3D}
\end{figure}

\begin{figure}[t]
	\centering
	\subfigure[Frobenius norm of each layer ${\left\| {{{\bf{W}}^{\left( h \right)}}\left( k \right)} \right\|_F}$]{
		\includegraphics[height=0.40\textwidth]{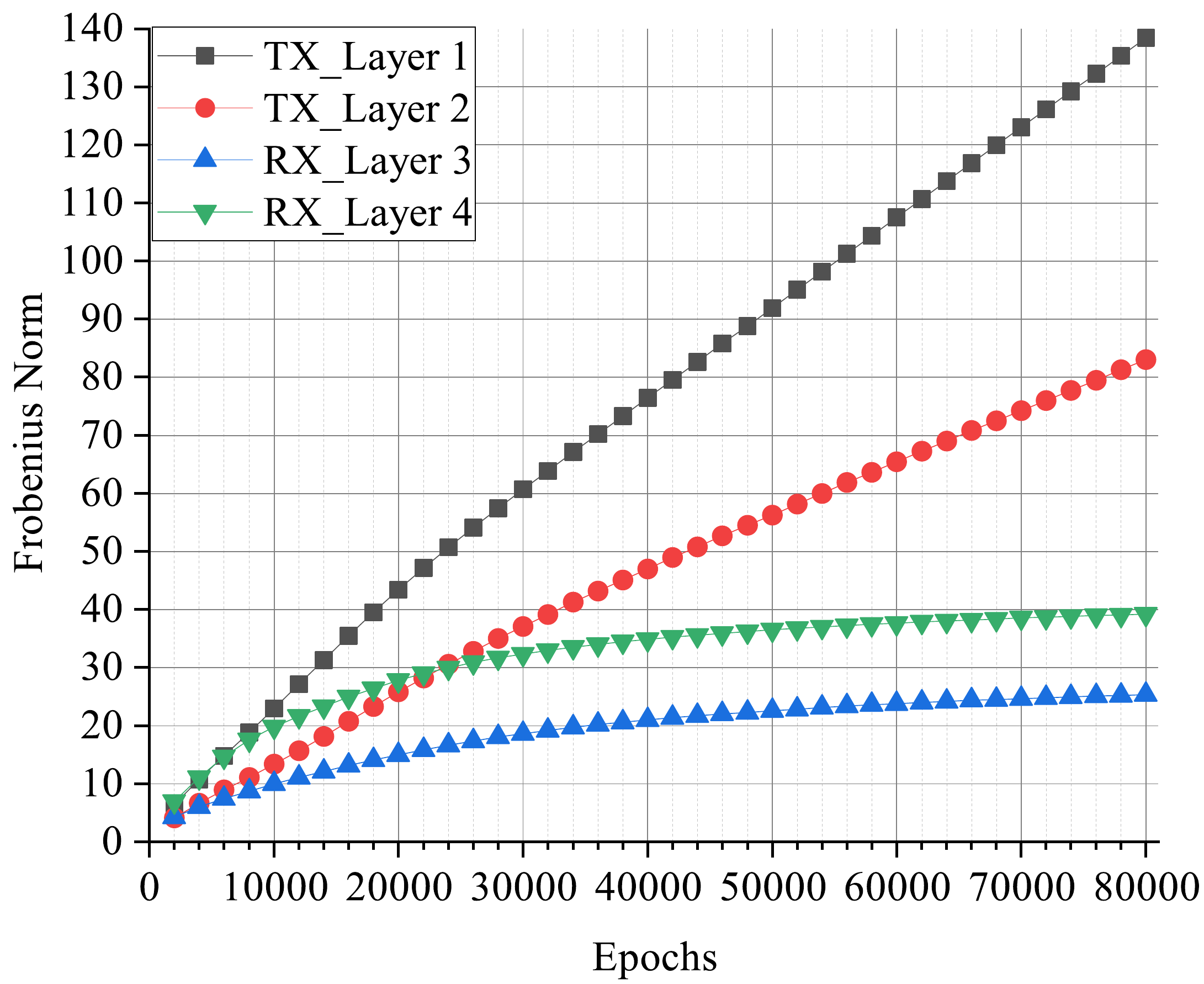}
	}
	\quad
	\subfigure[Constellation]{
		\includegraphics[height=0.40\textwidth]{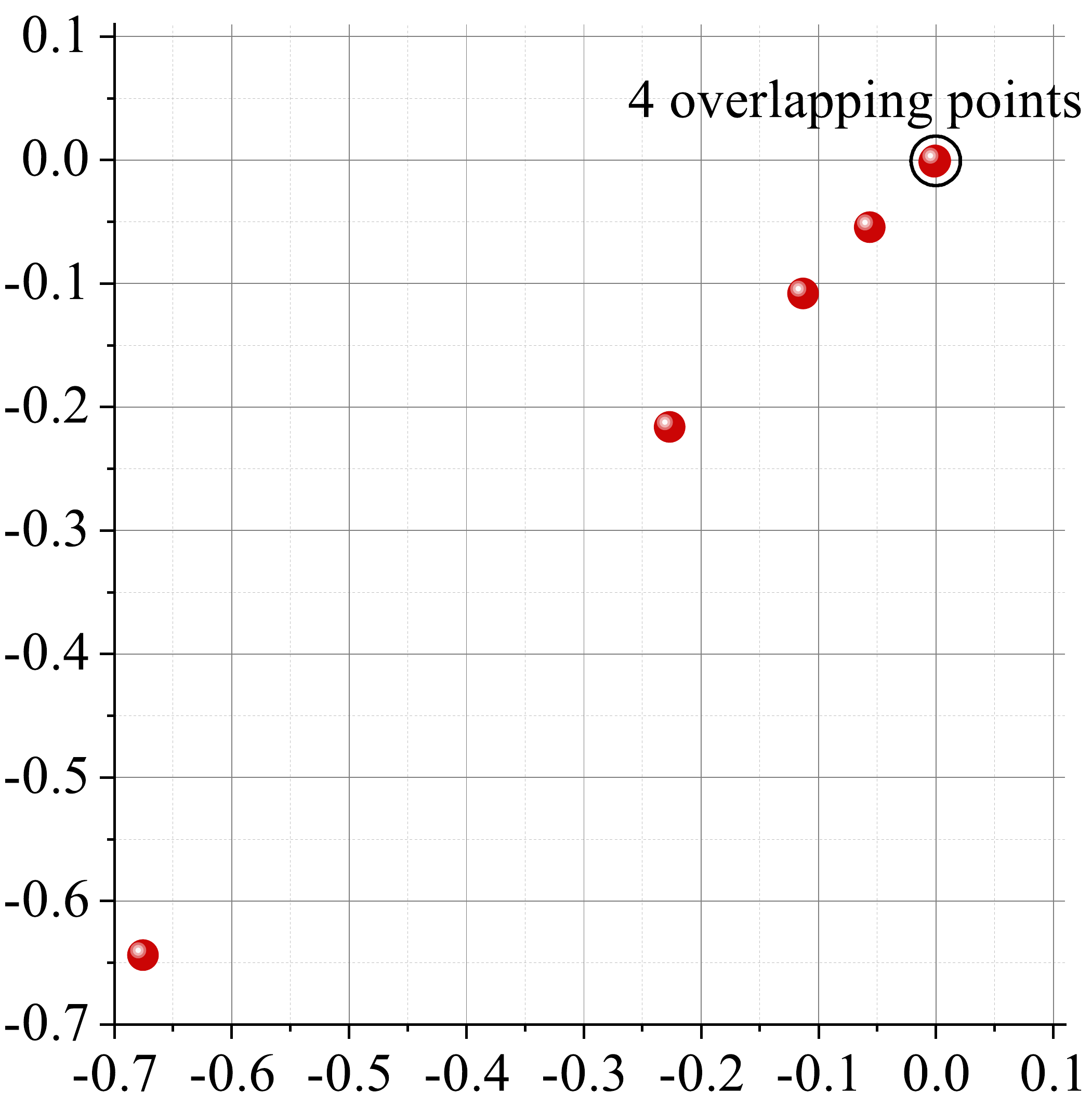}
	}
	\caption{The simulation results of the AE for $d=2$ and $M=8$ under Rayleigh flat fading channel (SNR = 25dB).}
	\label{FNorm&Con_RayleighFadingChannel}
\end{figure}

Fig. \ref{Constellations2D}(a) and Fig. \ref{Constellations3D}(a) show the optimum constellations obtained by gradient-search technique proposed in \cite{foschini1974optimization}. When $d=2$ and $3$, the algorithm was run allowing for 1000 and 3000 steps, respectively. Several random initialization constellations (initial points selected according to a uniform probability density of the unit disk) are used for each value of $d$ and $M$. The step size $\eta  = 2 \times {10^{ - 4}}$. Numerous local optima are found, many of which are merely rotations or other symmetric modifications of each other. Fig. \ref{Constellations2D}(b) and Fig. \ref{Constellations3D}(b) show the constellations produced by AEs which have the same network structures and hyperparameters with the AEs mentioned in \cite{o2017introduction} (see also TABLE \ref{AEsLayout}). The AEs were trained with ${10^6}$ epochs, each of which contains $M$ different symbols. Several simulations have been conducted, and it can be found that AE does not ensure that the optimum constellation whereas the gradient-search technique has a high probability of finding an optimum.

When $d=2$, the two-dimensional constellations produced by AEs have a similar pattern to the optimum constellations which form a lattice of (almost) hexagonal. Specifically, in the case of $\left( {d = 2,\;M = 8} \right)$, one of the constellations found by the AE can be obtained by rotating the optimum constellation found by gradient-search technique. In the case of $\left( {d = 2,\;M = 16} \right)$, the constellation found by the AE is different from the optimum constellation but it still forms a lattice of (almost)  equilateral triangles. In the case of $\left( {d = 3,~M = 16} \right)$, one signal point of the optimum constellation found by gradient-search technique is almost at the origin while the other 15 signal points are almost at the surface of a sphere with radius $P_{{\rm{av}}}$ and centre 0. This pattern is similar to the surface of a truncated icosahedron which is composed of pentagonal and hexagonal faces. However, the three-dimensional constellation produced by an AE is a local optima which is formed by 16 signal points almost in a plane.  

From the perspective of computational complexity, the cost to train an AE is significantly higher than the cost of traditional algorithm. Specifically, an AE which has 4 dense layers respectively with $M$, $d$, $M$ and $M$ neurons needs to train $\left( {2M + 1} \right)\left( {M + d} \right) + 2M$ parameters for ${10^6}$ epochs whereas the gradient-search algorithm only needs $2M$ trainable parameters for ${10^3}$ steps.

\subsubsection{Rayleigh Flat Fading Channel}
In the case of Rayleigh flat fading channel, multiplicative perturbation is introduced since ${{{\bf{W'}}}^{\left( {{H_1} + 1} \right)}} = \left[ {{\bf{H}},{\bf{n}}} \right]$. Fig. \ref{FNorm&Con_RayleighFadingChannel}(a) illustrates that the \emph{exploding gradient problem} occurs at the side of transmitter under the case of Rayleigh flat fading channel with SNR= 25dB, which is similar to the case of low SNR Gaussian channel. It means that fading leads the AE not to convergence even if noise is small. Fig. \ref{FNorm&Con_RayleighFadingChannel}(b) illustrates the corresponding constellation produced by AE. The receiver cannot distinguish all the transmitted signals  correctly because of 4 overlapping points.

Summarily, the structure of AE-based communication system makes it has strict requirements on channel to train the network. First, it demands high SNR. Second, the AE cannot work properly in fading channels. These impede the implement of AE-based communication system in practical wireless scenarios.

\begin{figure}[t]
	\centering
	\includegraphics[width=3.50in]{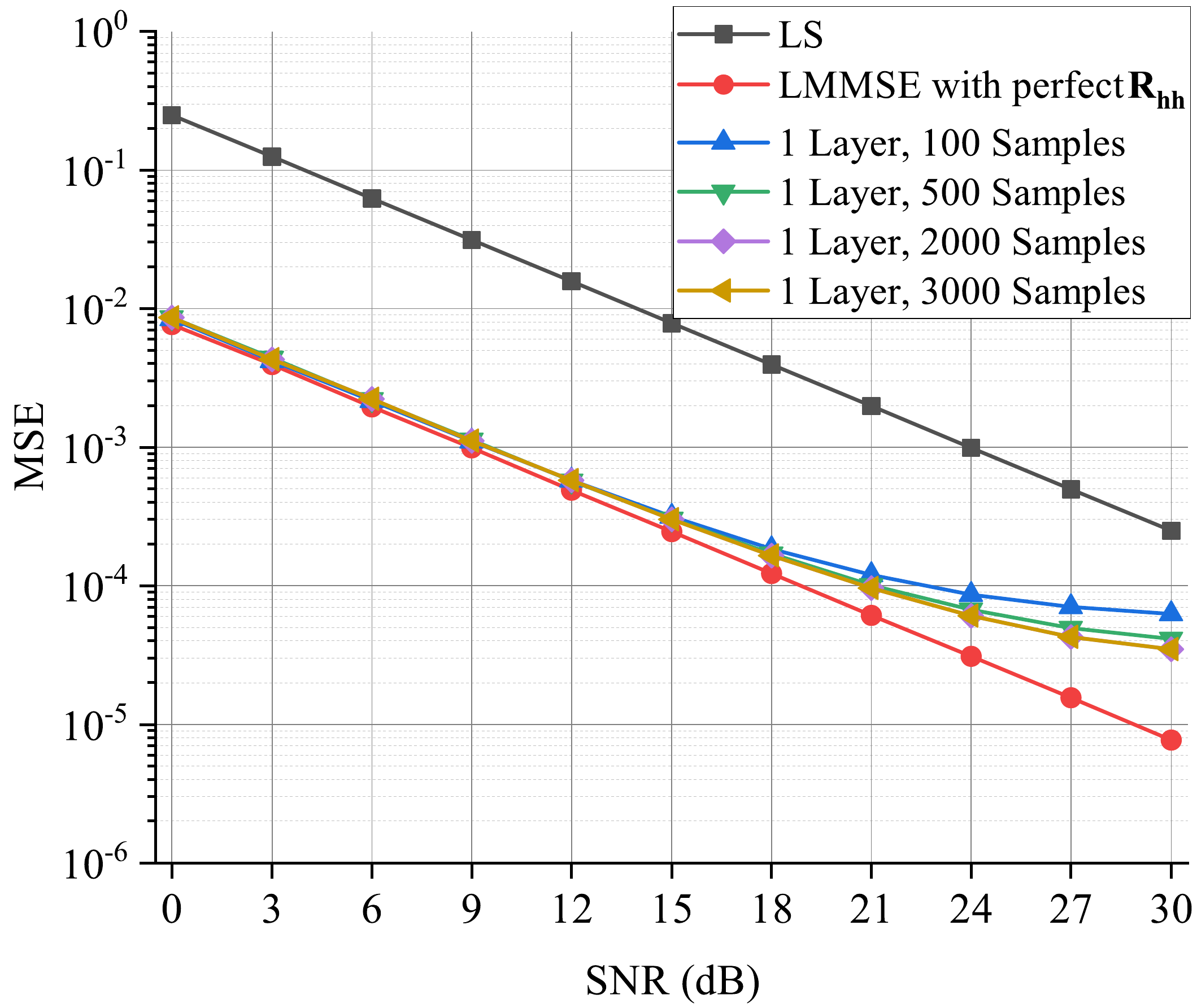}
	\caption{The MSE performance of LS estimator, LMMSE estimator, and single hidden layer estimator trained by different training sets versus SNRs.}
	\label{MSE_Samples}
\end{figure}

\subsection{The Performance of DNN-based Estimation}
We consider a common channel estimation problem for an OFDM system. Let ${\bf{z}} \buildrel \Delta \over = {\left[ {H\left[ 0 \right],H\left[ 1 \right], \cdots ,H\left[ {N_c - 1} \right]} \right]^T}$ which denotes channel frequency response (CFR) vector of a channel. $N_c$ denotes the number of subcarriers of an OFDM symbol. For the sake of convenience, we denote the measurable variable as ${\bf{v}} \buildrel \Delta \over = {{\bf{\hat z}}_{{\rm{LS}}}}$ where ${{\bf{\hat z}}_{{\rm{LS}}}}$ represents the LS estimation of $\bf{z}$. Usually, it can be obtained by using training symbol-based channel estimation. Practically, the MMSE channel estimation will not be chosen, unless the covariance matrix of fading channel is known.

\subsubsection{MSE Performance}
Fig. \ref{MSE_Samples} compares the MSEs of the LS, LMMSE, and single
 hidden layer estimators versus SNRs. The size of the test set is $10^5$. The size of neural network $d=128$. The MSEs of the LS and LMMSE estimators are used as the benchmarks. The MSE performance of the single hidden layer neural network improves as the size of the training set increases. Compared to LS estimator, the single hidden layer neural network has superior performance, even if the size of the training set is 100. However, its performance is inferior to the LMMSE regardless of the size of the training set.

\begin{figure}[t]
	\centering
	\includegraphics[width=3.50in]{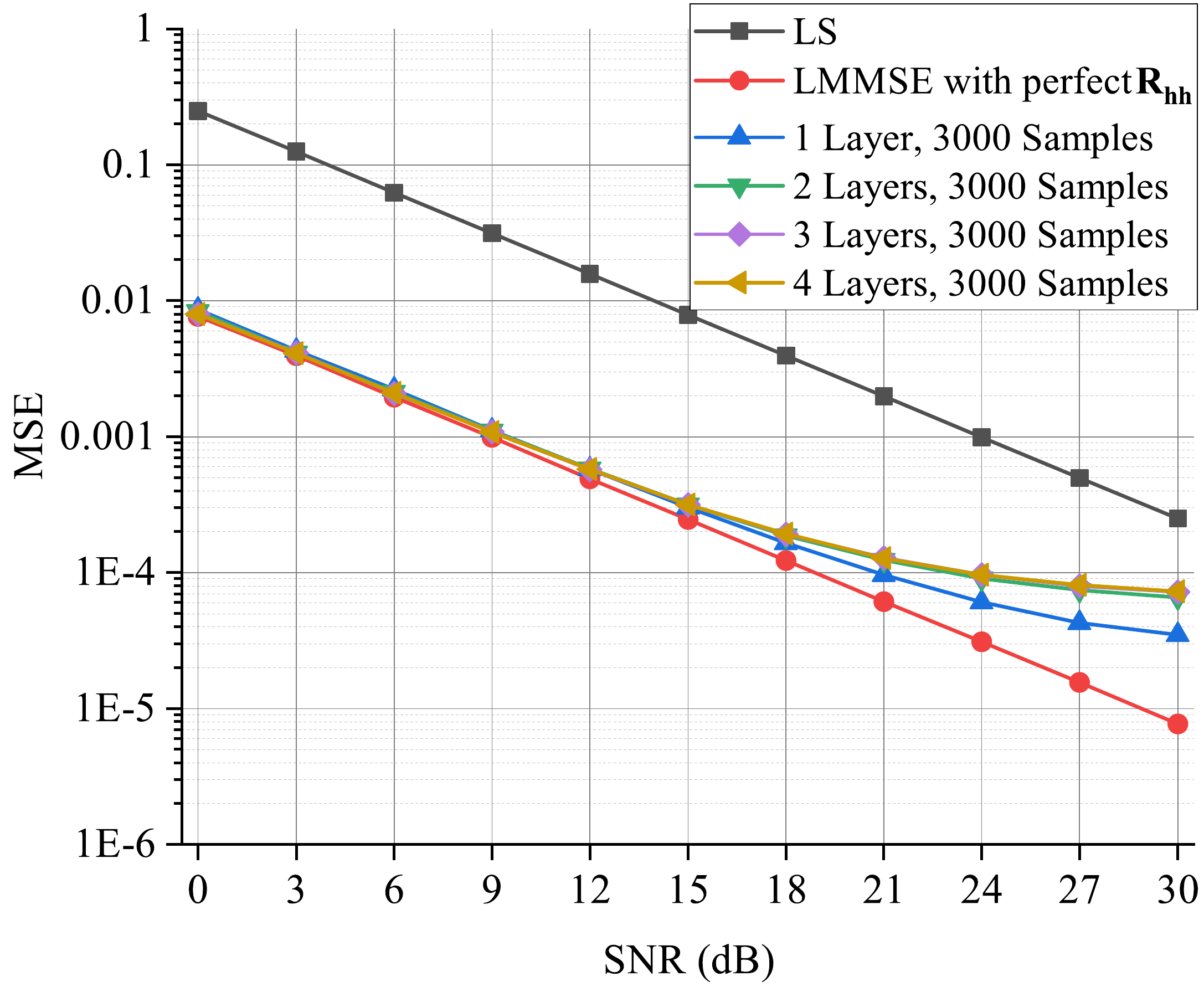}
	\caption{The MSE performance of LS estimator, LMMSE estimator, and neural network-based estimators with different hidden layers versus SNRs.}
	\label{MSE_Layers}
\end{figure}

\begin{figure}[t]
	\centering
	\includegraphics[width=3.50in]{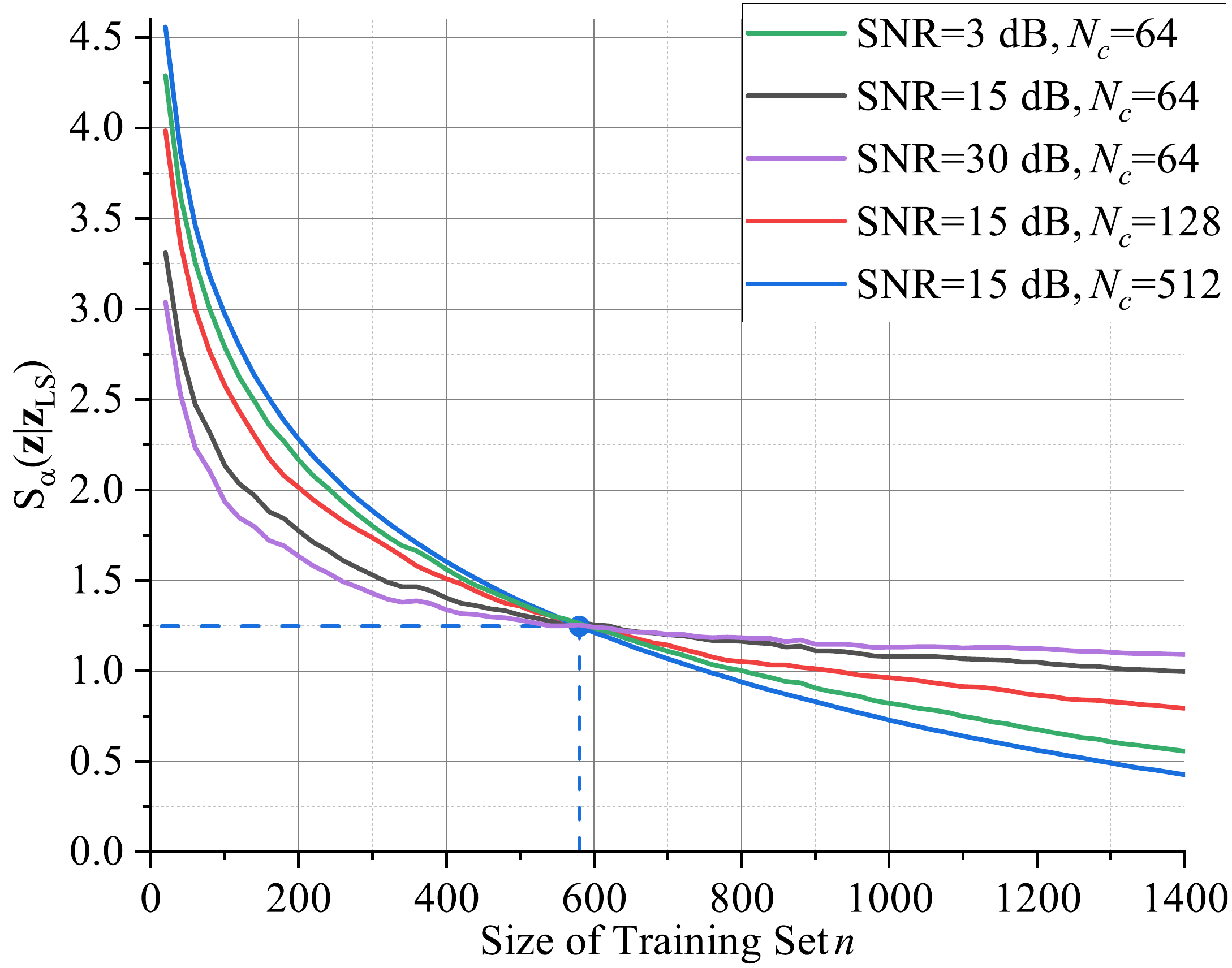}
	\caption{The entropy ${S_\alpha }\left( {{\bf{z}}|{{{\bf{\hat z}}}_{{\rm{LS}}}}} \right)$  with respect to different values of SNR and $N_c$.}
	\label{PopulationRiskFig}
\end{figure}

Fig. \ref{MSE_Layers} compares the MSEs of neural network-based estimators with different hidden layers to that of the LS and LMMSE estimators. The sizes of these neural networks are the same with $d=128$. Given a limited set of training data, the MSE performance of neural network-based estimator degrades with increased hidden layers. This result coincides with the description in {\bf{Theorem \ref{David_Bound}}}.

\subsubsection{Information Flow}

According to (\ref{PopulationRisk}), the minimum logarithmic expected (population) risk for this inference problem is $H\left( {{\bf{z}}|{{{\bf{\hat z}}}_{{\rm{LS}}}}} \right)$ which can be estimated by Rényi's $\alpha$-entropy ${S_\alpha }\left( {{\bf{z}}|{{{\bf{\hat z}}}_{{\rm{LS}}}}} \right){\rm{ = }}{S_\alpha }\left( {{\bf{z}},{{{\bf{\hat z}}}_{{\rm{LS}}}}} \right) - {S_\alpha }\left( {{{{\bf{\hat z}}}_{{\rm{LS}}}}} \right)$ with $\alpha=1.01$. Fig. \ref{PopulationRiskFig} illustrates the entropy ${S_\alpha }\left( {{\bf{z}}|{{{\bf{\hat z}}}_{{\rm{LS}}}}} \right)$ with respect to different values of SNR and $N_c$. In a practical scenario, we use linear interpolation and the number of pilots ${N_p} = N_c/4$. As can be seen, ${S_\alpha }\left( {{\bf{z}}|{{{\bf{\hat z}}}_{{\rm{LS}}}}} \right)$ monotonically decreases as the size of training set increases. When $n \to \infty$, ${S_\alpha }\left( {{\bf{z}}|{{{\bf{\hat z}}}_{{\rm{LS}}}}} \right)$ decreases slowly. It is because the joint probability distribution $p\left( {{\bf{z}},{{{\bf{\hat z}}}_{{\rm{LS}}}}} \right)$ can be perfectly learned and therefore the empirical risk is approaching to the expected risk. Interestingly, when $n>580$, the lower the SNR or the larger input dimension $d$ is, the smaller $n$ is needed to obtain the same value of ${S_\alpha }\left( {{\bf{z}}|{{{\bf{\hat z}}}_{{\rm{LS}}}}} \right)$.

\begin{figure*}[t]
	\centering
	\includegraphics[width=6.0in]{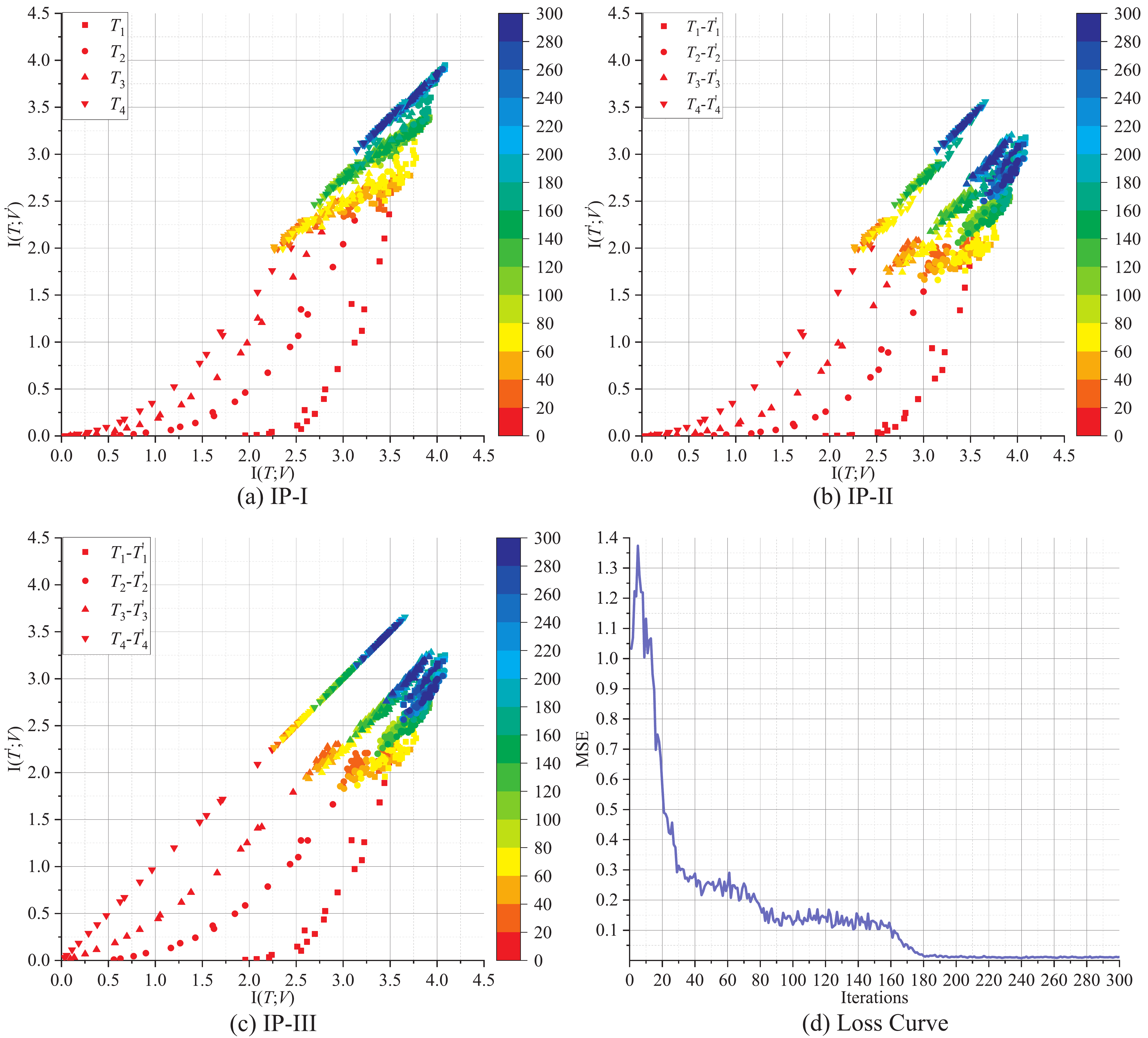}
	\caption{The three IPs and loss curve in a DNN-based OFDM channel estimator for $N_c=64$ and $S=4$.}
	\label{InformationPlane_S=4}
\end{figure*}

\begin{figure*}[t]
	\begin{center}
		\centering
		\includegraphics[width=6.0in]{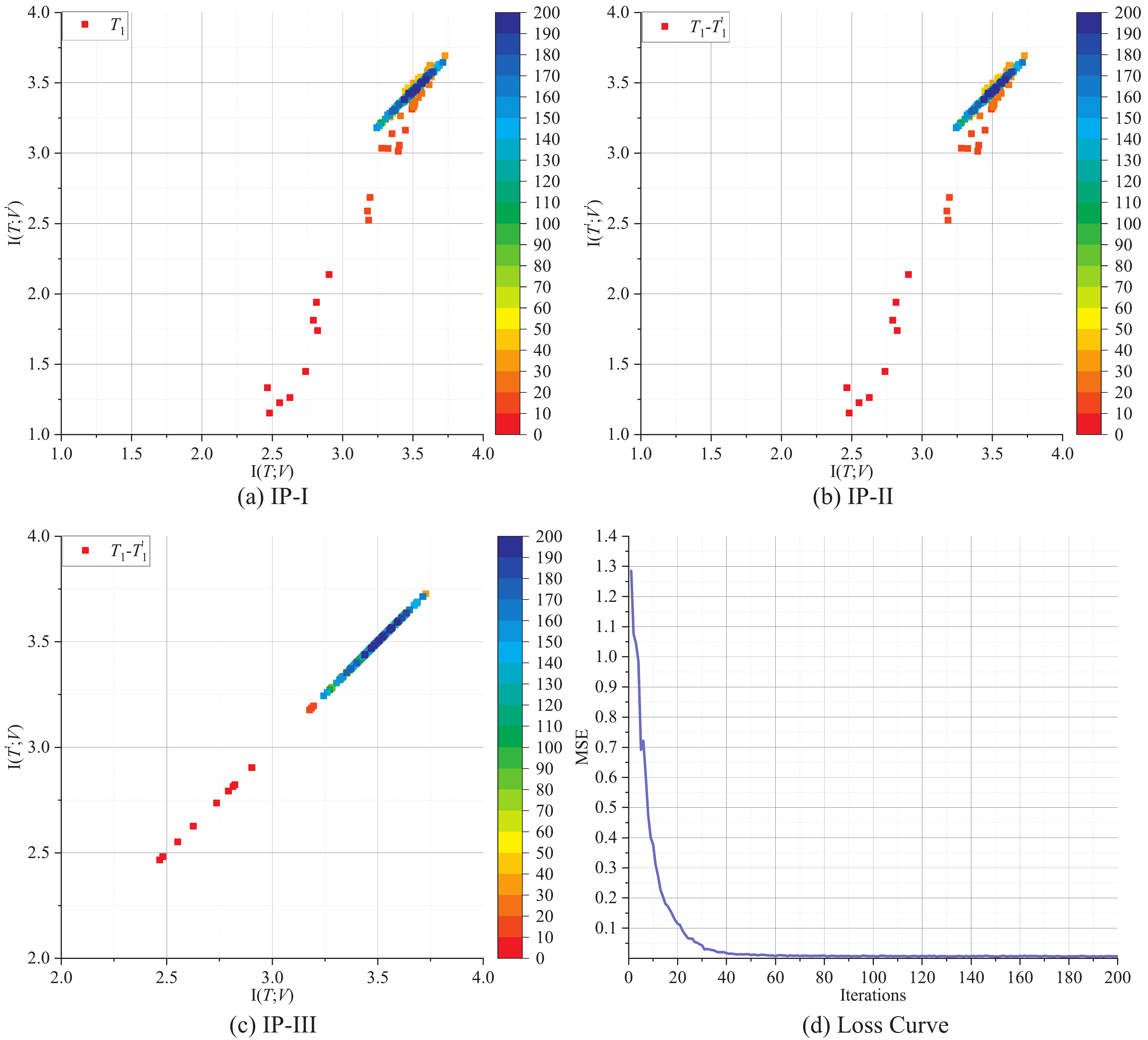}
		\caption{The three IPs and loss curve in a linear SLFN-based OFDM channel estimator for $N_c=64$ and $S=1$.}
		\label{InformationPlane_S=1}
	\end{center}
\end{figure*}

\begin{table}[t]
	\centering
	\caption{Layout of the NN-based OFDM Channel Estimators}
	\begin{tabular}{c|c|c|}
		\multicolumn{1}{l|}{$S$} & \multicolumn{1}{l|}{Input/Output dimension ($2N_c$)} & \multicolumn{1}{l|}{Number of hidden layers ($2S-1$)} \\ \hline
		1                      & 128                                                  & 1                                                   \\
		4                      & 128                                                  & 7                                                  
	\end{tabular}
\label{EstimatorsLayout}
\end{table}

We analyze two NN-based OFDM channel estimators with different layouts (See TABLE \ref{EstimatorsLayout}). Fig. \ref{InformationPlane_S=4}(a), (b) and (c) illustrate the behavior of the IP-I, IP-II and IP-III  in a DNN-based OFDM channel estimator with topology ``$128 - 64 - 32 - 16 - 8 - 16 - 32 - 64 - 128$'' where linear activation function is considered and the training sample is constructed by concatenating the real and imaginary parts of the complex channel vectors. Batch size is 100 and learning rate $\eta=0.001$. Note that in our in preliminary work \cite{Mei2021}, we find that the linear learning model may achieve better performance compared with other models, and therefore the linear activation function is chosen in this paper. We use $V$ and ${V'}$ to denote the input and output of the decoder, respectively. The number of iterations is illustrated through a color bar. From IP-I, it can be seen that the final value of mutual information ${\rm{I}}\left( {T;V'} \right)$ in each layer tends to be equal to the final value of ${\rm{I}}\left( {T;V} \right)$, which means that the information from $V$ has been learnt and transferred to $V'$ by each layer. In IP-II, ${\rm{I}}\left( {T';V'} \right) < {\rm{I}}\left( {T;V} \right)$ in each layer, which implies that all the layers are not overfitting. The tendency of ${\rm{I}}\left( {T;V} \right)$ to approach the value of ${\rm{I}}\left( {T';V} \right)$ can be observed from IP-III. Finally, from all the IPs, it is easy to notice that the mutual information does not change significantly when the number of iterations is larger than 200. Meanwhile, according to Fig. \ref{InformationPlane_S=4}(d), the MSE value reaches a very low value and also does not change sharply. It means that 200 iterations are enough for the task of 64-subcarrier channel estimation using a DNN with the above-mentioned topology.

In \cite{liu2019online},  single hidden layer feedforward neural network (SLFN)-based channel estimation and equalization scheme shows outstanding performance compared to the DNN-based scheme. However, it is still unknown whether a SLFN can completely learn the channel structural information from the training set, and then transfer the information from the input layer to the output layer. Fig. \ref{InformationPlane_S=1}(a), (b) and (c) illustrate the behavior of the IP-I, IP-II, and IP-III  in an SLFN-based OFDM channel estimator with topology ``$128 - 128 - 128$'' where the other hyperparameters are same to that of the DNN with $S=4$. In this case, the IP-I and IP-II are entirely identical since $S=1$. From IP-I, when the number of iterations is larger than 50, it can be seen that ${\rm{I}}\left( {T;V'} \right)$ in hidden layer tends to be equal to ${\rm{I}}\left( {T;V} \right)$. The final value of ${\rm{I}}\left( {T;V} \right)$ approximately equals to 3.5 which is nearly the same final value for $S=4$. Correspondingly, the MSE value does not change significantly. Furthermore, comparing Fig. \ref{InformationPlane_S=1}(d) to Fig. \ref{InformationPlane_S=4}(d), the MSE value decreases more rapidly and smoothly. These mean that the SLFN with 128 hidden neurons is able to learn the same information from the training set with $N_c=64$ and its learning speed and quality are better than that of the DNN with $S=4$. 
\section{Conclusion}
\label{Conclusion}
In this paper, we propose a framework to understand the manner of the DNNs in physical communication. We find that a DNN-based transmitter essentially tries to produce a good representation of the information source. In terms of convergence, the AE has specific requirements for wireless channels, i.e., the channel should be AWGN with high SNR. Then, we quantitatively analyze the MSE performance of neural network-based estimators and the information flow in neural network-based communication systems. The analysis reveals that, in the practical scenario, i.e, given limited training samples, a neural network with deeper layers may has inferior MSE performance compared to a shallow one. For the task of inference (e.g., channel estimation), we verify that the decoder can learn the information from a training set, and the shallow neural network with a single hidden layer has advantages in learning speed and quality by comparing with the DNN.

We believe that this framework has the potential for the design of DNN-based physical communication. Specifically, theoretical analysis shows that the application of the neural network-based communication system with end-to-end structure has high requirements on the channel, and the practical application range is limited. Therefore, it is more suitable to deploy neural networks at receiver. Under the condition of limited training samples, the neural network with single hidden layer can achieve the optimal MSE estimation performance, and therefore an SLFN can be deployed in a receiver for the task of channel estimation. Furthermore, the size of the training set and the dimension of a DNN can be determined by the proposed framework.

In the future, limitations for the DNN under fading channels should be solved. It would be interesting to use deep reinforcement learning technique to design waveform parameters for a transmitter instead of entirely replacing it with a DNN.

\appendices
\section{Matrix-based Functional of Rényi’s $\alpha$-Entropy}
For a random variable $X$ in a finite set $\mathcal X$, its Rényi’s entropy of order $\alpha$ is defined as
\begin{equation}
	{H_\alpha }\left( X \right) = \frac{1}{{1 - \alpha }}\log \int_{\cal X} {{f^\alpha }\left( x \right)dx} 
\end{equation}
where $f\left( x \right)$ is the PDF of the random variable $X$. Let $\left\{ {{x^{\left( i \right)}}} \right\}_{i = 1}^n$ be an \textit{i.i.d.} sample of $n$ realizations from the random variable $X$ with PDF $f\left( x \right)$. The Gram matrix ${\bf{K}}$ can be defined as ${\bf{K}}\left[ {i,j} \right] = \kappa \left( {{x_i},{x_j}} \right)$ where $\kappa :{\mathcal X} \times {\mathcal X} \mapsto {\mathbb R}$ is a real valued positive definite and infinitely divisible kernel. Then, a matrix-based analogue to Rényi’s $\alpha$-entropy for a normalized positive definite matrix $\bf{A}$ of size $n \times n$ with trace 1 can be given by the functional 
\begin{equation}
	{S_\alpha }\left( {\bf{A}} \right) = \frac{1}{{1 - \alpha }}{\log _2}\left[ {\sum\limits_{i = 1}^n {{\lambda _i}{{\left( {\bf{A}} \right)}^\alpha }} } \right]
\end{equation}
where ${{\lambda _i}\left( {\bf{A}} \right)}$ denotes the $i$-th eigenvalue of $\bf{A}$, a normalized version of $\bf{K}$:
\begin{equation}
	{\bf{A}}\left[ {i,j} \right] = \frac{1}{n}\frac{{{\bf{K}}\left[ {i,j} \right]}}{{\sqrt {{\bf{K}}\left[ {i,i} \right]{\bf{K}}\left[ {j,j} \right]} }}.
\end{equation}
Now, the joint-entropy can be defined as
\begin{equation}
	{S_\alpha }\left( {{\bf{A}},{\bf{B}}} \right) = {S_\alpha }\left[ {\frac{{{\bf{A}} \odot {\bf{B}}}}{{{\rm{tr}}\left( {{\bf{A}} \odot {\bf{B}}} \right)}}} \right].
\end{equation}
Finally, the matrix notion of Rényi’s mutual information can be defined as 
\begin{equation}
	{I_\alpha }\left( {{\bf{A}};{\bf{B}}} \right) = {S_\alpha }\left( {\bf{A}} \right) + {S_\alpha }\left( {\bf{B}} \right) - {S_\alpha }\left( {{\bf{A}},{\bf{B}}} \right).
\end{equation}




\ifCLASSOPTIONcaptionsoff
  \newpage
\fi



%

%
%

\bibliographystyle{IEEEtran}  
\bibliography{IEEEabrv,Bibliography/MyCollection}

\begin{thebibliography}{10}
\providecommand{\url}[1]{#1}
\csname url@samestyle\endcsname
\providecommand{\newblock}{\relax}
\providecommand{\bibinfo}[2]{#2}
\providecommand{\BIBentrySTDinterwordspacing}{\spaceskip=0pt\relax}
\providecommand{\BIBentryALTinterwordstretchfactor}{4}
\providecommand{\BIBentryALTinterwordspacing}{\spaceskip=\fontdimen2\font plus
\BIBentryALTinterwordstretchfactor\fontdimen3\font minus
  \fontdimen4\font\relax}
\providecommand{\BIBforeignlanguage}[2]{{%
\expandafter\ifx\csname l@#1\endcsname\relax
\typeout{** WARNING: IEEEtran.bst: No hyphenation pattern has been}%
\typeout{** loaded for the language `#1'. Using the pattern for}%
\typeout{** the default language instead.}%
\else
\language=\csname l@#1\endcsname
\fi
#2}}
\providecommand{\BIBdecl}{\relax}
\BIBdecl

\bibitem{Liu2022}
J.~Liu, H.~Zhao, D.~Ma, K.~Mei, and J.~Wei, ``Opening the black box of deep
  neural networks in physical layer communication,'' in \emph{2022 IEEE
  Wireless Communications and Networking Conference (WCNC)}, 2022, pp.
  435--440.

\bibitem{shannon1948mathematical}
C.~E. Shannon, ``A mathematical theory of communication,'' \emph{The Bell
  system technical journal}, vol.~27, no.~3, pp. 379--423, 1948.

\bibitem{senior2020improved}
A.~W. Senior, R.~Evans, J.~Jumper, J.~Kirkpatrick, L.~Sifre, T.~Green, C.~Qin,
  A.~{\v{Z}}{\'\i}dek, A.~W. Nelson, A.~Bridgland \emph{et~al.}, ``Improved
  protein structure prediction using potentials from deep learning,''
  \emph{Nature}, vol. 577, no. 7792, pp. 706--710, 2020.

\bibitem{He2016}
K.~He, X.~Zhang, S.~Ren, and J.~Sun, ``Deep residual learning for image
  recognition,'' in \emph{2016 IEEE Conference on Computer Vision and Pattern
  Recognition (CVPR)}, 2016, pp. 770--778.

\bibitem{AbdelHamid2014}
O.~Abdel-Hamid, A.-r. Mohamed, H.~Jiang, L.~Deng, G.~Penn, and D.~Yu,
  ``Convolutional neural networks for speech recognition,'' \emph{IEEE/ACM
  Transactions on Audio, Speech, and Language Processing}, vol.~22, no.~10, pp.
  1533--1545, 2014.

\bibitem{Hirschberg2015}
J.~Hirschberg and C.~D. Manning, ``Advances in natural language processing,''
  \emph{Science}, vol. 349, no. 6245, pp. 261--266, 2015.

\bibitem{o2017introduction}
T.~O'Shea and J.~Hoydis, ``An introduction to deep learning for the physical
  layer,'' \emph{IEEE Transactions on Cognitive Communications and Networking},
  vol.~3, no.~4, pp. 563--575, 2017.

\bibitem{o2018physical}
T.~J. O'Shea, T.~Roy, N.~West, and B.~C. Hilburn, ``Physical layer
  communications system design over-the-air using adversarial networks,'' in
  \emph{2018 26th European Signal Processing Conference (EUSIPCO)}.\hskip 1em
  plus 0.5em minus 0.4em\relax IEEE, 2018, pp. 529--532.

\bibitem{zhu2019joint}
B.~Zhu, J.~Wang, L.~He, and J.~Song, ``Joint transceiver optimization for
  wireless communication {PHY} using neural network,'' \emph{IEEE Journal on
  Selected Areas in Communications}, vol.~37, no.~6, pp. 1364--1373, 2019.

\bibitem{MorochoCayamcela2020}
M.~E. Morocho-Cayamcela, J.~N. Njoku, J.~Park, and W.~Lim, ``Learning to
  communicate with autoencoders: Rethinking wireless systems with deep
  learning,'' in \emph{2020 International Conference on Artificial Intelligence
  in Information and Communication (ICAIIC)}, 2020, pp. 308--311.

\bibitem{felix2018ofdm}
A.~Felix, S.~Cammerer, S.~D{\"o}rner, J.~Hoydis, and S.~T. Brink,
  ``{OFDM}-autoencoder for end-to-end learning of communications systems,'' in
  \emph{2018 IEEE 19th International Workshop on Signal Processing Advances in
  Wireless Communications (SPAWC)}.\hskip 1em plus 0.5em minus 0.4em\relax
  IEEE, 2018, pp. 1--5.

\bibitem{OShea2017Physical}
T.~J. O'Shea, T.~Erpek, and T.~C. Clancy, ``Physical layer deep learning of
  encodings for the {MIMO} fading channel,'' in \emph{2017 55th Annual Allerton
  Conference on Communication, Control, and Computing (Allerton)}, 2017, pp.
  76--80.

\bibitem{Liu2021Fine}
J.~Liu, K.~Mei, X.~Zhang, D.~McLernon, D.~Ma, J.~Wei, and S.~A.~R. Zaidi,
  ``Fine timing and frequency synchronization for {MIMO}-{OFDM}: An extreme
  learning approach,'' \emph{IEEE Transactions on Cognitive Communications and
  Networking}, pp. 1--1, 2021.

\bibitem{Aoudia2019}
F.~A. Aoudia and J.~Hoydis, ``Model-free training of end-to-end communication
  systems,'' \emph{IEEE Journal on Selected Areas in Communications}, vol.~37,
  no.~11, pp. 2503--2516, 2019.

\bibitem{raj2018backpropagating}
V.~Raj and S.~Kalyani, ``Backpropagating through the air: Deep learning at
  physical layer without channel models,'' \emph{IEEE Communications Letters},
  vol.~22, no.~11, pp. 2278--2281, 2018.

\bibitem{qin2019deep}
Z.~Qin, H.~Ye, G.~Y. Li, and B.-H.~F. Juang, ``Deep learning in physical layer
  communications,'' \emph{IEEE Wireless Communications}, vol.~26, no.~2, pp.
  93--99, 2019.

\bibitem{wen2018deep}
C.-K. Wen, W.-T. Shih, and S.~Jin, ``Deep learning for massive {MIMO} {CSI}
  feedback,'' \emph{IEEE Wireless Communications Letters}, vol.~7, no.~5, pp.
  748--751, 2018.

\bibitem{wang2018deep}
T.~Wang, C.-K. Wen, S.~Jin, and G.~Y. Li, ``Deep learning-based {CSI} feedback
  approach for time-varying massive {MIMO} channels,'' \emph{IEEE Wireless
  Communications Letters}, vol.~8, no.~2, pp. 416--419, 2018.

\bibitem{li2019deep}
L.~Li, H.~Chen, H.-H. Chang, and L.~Liu, ``Deep residual learning meets {OFDM}
  channel estimation,'' \emph{IEEE Wireless Communications Letters}, vol.~9,
  no.~5, pp. 615--618, 2019.

\bibitem{zhang2019deep}
J.~Zhang, Y.~Cao, G.~Han, and X.~Fu, ``Deep neural network-based underwater
  {OFDM} receiver,'' \emph{IET communications}, vol.~13, no.~13, pp.
  1998--2002, 2019.

\bibitem{Yang2019}
Y.~Yang, F.~Gao, X.~Ma, and S.~Zhang, ``Deep learning-based channel estimation
  for doubly selective fading channels,'' \emph{IEEE Access}, vol.~7, pp.
  36\,579--36\,589, 2019.

\bibitem{Balevi2019}
E.~Balevi and J.~G. Andrews, ``One-bit {OFDM} receivers via deep learning,''
  \emph{IEEE Transactions on Communications}, vol.~67, no.~6, pp. 4326--4336,
  2019.

\bibitem{Mei2021}
K.~Mei, J.~Liu, X.~Zhang, K.~Cao, N.~Rajatheva, and J.~Wei, ``A low complexity
  learning-based channel estimation for {OFDM} systems with online training,''
  \emph{IEEE Transactions on Communications}, vol.~69, no.~10, pp. 6722--6733,
  2021.

\bibitem{Nguyen2021}
L.~V. Nguyen, A.~L. Swindlehurst, and D.~H.~N. Nguyen, ``Linear and deep neural
  network-based receivers for massive {MIMO} systems with one-bit {ADC}s,''
  \emph{IEEE Transactions on Wireless Communications}, vol.~20, no.~11, pp.
  7333--7345, 2021.

\bibitem{he2019model}
H.~He, S.~Jin, C.-K. Wen, F.~Gao, G.~Y. Li, and Z.~Xu, ``Model-driven deep
  learning for physical layer communications,'' \emph{IEEE Wireless
  Communications}, vol.~26, no.~5, pp. 77--83, 2019.

\bibitem{liu2019online}
J.~Liu, K.~Mei, X.~Zhang, D.~Ma, and J.~Wei, ``Online extreme learning
  machine-based channel estimation and equalization for {OFDM} systems,''
  \emph{IEEE Communications Letters}, vol.~23, no.~7, pp. 1276--1279, 2019.

\bibitem{Tishby2000}
N.~Tishby, F.~C. Pereira, and W.~Bialek, ``The information bottleneck method,''
  \emph{arXiv preprint physics/0004057}, 2000.

\bibitem{Tishby2015}
N.~Tishby and N.~Zaslavsky, ``Deep learning and the information bottleneck
  principle,'' in \emph{2015 IEEE Information Theory Workshop (ITW)}, 2015, pp.
  1--5.

\bibitem{zaidi2020information}
A.~Zaidi, I.~Estella-Aguerri, and S.~Shamai~(Shitz), ``On the information
  bottleneck problems: {M}odels, connections, applications and information
  theoretic views,'' \emph{Entropy}, vol.~22, no.~2, p. 151, 2020.

\bibitem{Aguerri2021}
I.~E. Aguerri and A.~Zaidi, ``Distributed variational representation
  learning,'' \emph{IEEE Transactions on Pattern Analysis and Machine
  Intelligence}, vol.~43, no.~1, pp. 120--138, 2021.

\bibitem{Xu2021}
\BIBentryALTinterwordspacing
H.~Xu, T.~Yang, G.~Caire, and S.~Shamai~(Shitz), ``Information bottleneck for a
  {R}ayleigh fading {MIMO} channel with an oblivious relay,''
  \emph{Information}, vol.~12, no.~4, 2021. [Online]. Available:
  \url{https://www.mdpi.com/2078-2489/12/4/155}
\BIBentrySTDinterwordspacing

\bibitem{Liao2020}
F.~Liao, S.~Wei, and S.~Zou, ``Deep learning methods in communication systems:
  A review,'' in \emph{Journal of Physics: Conference Series}, vol. 1617,
  no.~1.\hskip 1em plus 0.5em minus 0.4em\relax IOP Publishing, 2020, p.
  012024.

\bibitem{sklar2014digital}
B.~Sklar and P.~K. Ray, \emph{Digital Communications Fundamentals and
  Applications}.\hskip 1em plus 0.5em minus 0.4em\relax Pearson Education,
  2014.

\bibitem{yu2017autoencoders}
S.~Yu, M.~Emigh, E.~Santana, and J.~C. Pr{\'\i}ncipe, ``Autoencoders trained
  with relevant information: {B}lending {S}hannon and {W}iener's
  perspectives,'' in \emph{2017 IEEE International Conference on Acoustics,
  Speech and Signal Processing (ICASSP)}.\hskip 1em plus 0.5em minus
  0.4em\relax IEEE, 2017, pp. 6115--6119.

\bibitem{boutros1996good}
J.~Boutros, E.~Viterbo, C.~Rastello, and J.-C. Belfiore, ``Good lattice
  constellations for both {R}ayleigh fading and {G}aussian channels,''
  \emph{IEEE Transactions on Information Theory}, vol.~42, no.~2, pp. 502--518,
  1996.

\bibitem{jorge2015algebraic}
G.~C. Jorge, A.~A. de~Andrade, S.~I. Costa, and J.~E. Strapasson, ``Algebraic
  constructions of densest lattices,'' \emph{Journal of Algebra}, vol. 429, pp.
  218--235, 2015.

\bibitem{foschini1974optimization}
G.~Foschini, R.~Gitlin, and S.~Weinstein, ``Optimization of two-dimensional
  signal constellations in the presence of {G}aussian noise,'' \emph{IEEE
  Transactions on Communications}, vol.~22, no.~1, pp. 28--38, 1974.

\bibitem{Rudin2019Why}
\BIBentryALTinterwordspacing
C.~Rudin and J.~Radin, ``Why are we using black box models in {AI} when we
  don’t need to? {A} lesson from an explainable {AI} competition,''
  \emph{Harvard Data Science Review}, vol.~1, no.~2, 11 2019,
  https://hdsr.mitpress.mit.edu/pub/f9kuryi8. [Online]. Available:
  \url{https://hdsr.mitpress.mit.edu/pub/f9kuryi8}
\BIBentrySTDinterwordspacing

\bibitem{du2018gradient}
S.~S. Du, X.~Zhai, B.~Poczos, and A.~Singh, ``Gradient descent provably
  optimizes over-parameterized neural networks,'' \emph{arXiv preprint
  arXiv:1810.02054}, 2018.

\bibitem{Du2019}
\BIBentryALTinterwordspacing
S.~Du, J.~Lee, H.~Li, L.~Wang, and X.~Zhai, ``Gradient descent finds global
  minima of deep neural networks,'' in \emph{Proceedings of the 36th
  International Conference on Machine Learning}, ser. Proceedings of Machine
  Learning Research, K.~Chaudhuri and R.~Salakhutdinov, Eds., vol.~97.\hskip
  1em plus 0.5em minus 0.4em\relax PMLR, 09--15 Jun 2019, pp. 1675--1685.
  [Online]. Available: \url{https://proceedings.mlr.press/v97/du19c.html}
\BIBentrySTDinterwordspacing

\bibitem{Beek1995}
J.-J. van~de Beek, O.~Edfors, M.~Sandell, S.~Wilson, and P.~Borjesson, ``On
  channel estimation in {OFDM} systems,'' in \emph{1995 IEEE 45th Vehicular
  Technology Conference. Countdown to the Wireless Twenty-First Century},
  vol.~2, 1995, pp. 815--819 vol.2.

\bibitem{Edfors1998}
O.~Edfors, M.~Sandell, J.-J. van~de Beek, S.~Wilson, and P.~Borjesson, ``{OFDM}
  channel estimation by singular value decomposition,'' \emph{IEEE Transactions
  on Communications}, vol.~46, no.~7, pp. 931--939, 1998.

\bibitem{Mei2021a}
K.~Mei, J.~Liu, X.~Zhang, N.~Rajatheva, and J.~Wei, ``Performance analysis on
  machine learning-based channel estimation,'' \emph{IEEE Transactions on
  Communications}, vol.~69, no.~8, pp. 5183--5193, 2021.

\bibitem{forney2004shannon}
G.~D. Forney~Jr, ``{S}hannon meets {W}iener {II}: On {MMSE} estimation in
  successive decoding schemes,'' \emph{arXiv preprint cs/0409011}, 2004.

\bibitem{diaz2021lower}
M.~Diaz, P.~Kairouz, and L.~Sankar, ``Lower bounds for the minimum mean-square
  error via neural network-based estimation,'' \emph{arXiv preprint
  arXiv:2108.12851}, 2021.

\bibitem{yu2019understanding}
S.~Yu and J.~C. Pr{\'\i}ncipe, ``Understanding autoencoders with information
  theoretic concepts,'' \emph{Neural Networks}, vol. 117, pp. 104--123, 2019.

\bibitem{giraldo2014measures}
L.~G.~S. Giraldo, M.~Rao, and J.~C. Pr{\'\i}ncipe, ``Measures of entropy from
  data using infinitely divisible kernels,'' \emph{IEEE Transactions on
  Information Theory}, vol.~61, no.~1, pp. 535--548, 2014.

\end{thebibliography}

%








\begin{IEEEbiography}[{\includegraphics[width=1in,height=1.25in,clip,keepaspectratio]{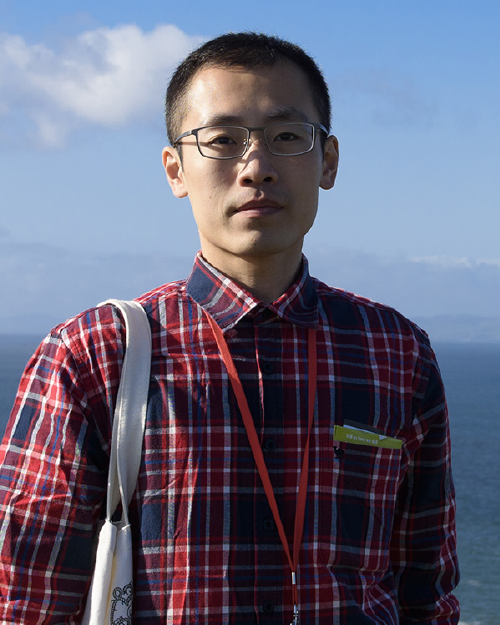}}]{Jun Liu}
	received the B.S. degree in optical information science and technology from the South China University of Technology (SCUT), Guangzhou, China, in 2015, and the M.E. degree in communications and information engineering from the National University of Defense Technology (NUDT), Changsha, China, in 2017, where he is currently pursuing the Ph.D. degree with the Department of Cognitive Communications. 
	
	He was a visiting Ph.D. student with the University of Leeds from 2019 to 2020. His current research interests include machine learning with a focus on shallow neural networks applications, signal processing for broadband wireless communication systems, multiple antenna techniques, and wireless channel modeling.
\end{IEEEbiography}

\begin{IEEEbiography}[{\includegraphics[width=1in,height=1.25in,clip,keepaspectratio]{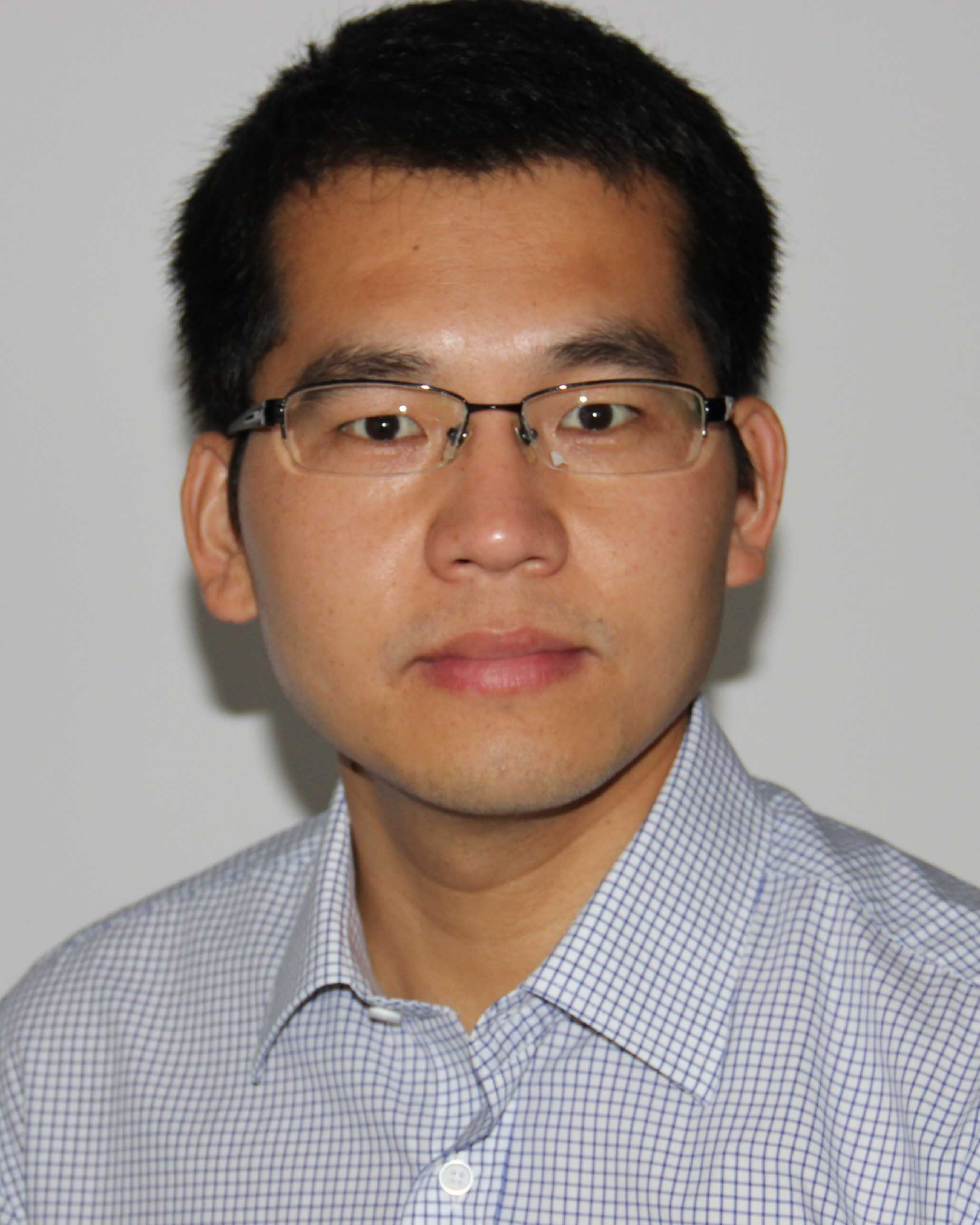}}]{Haitao Zhao}
	(Senior Member, IEEE) received his B.E., M.Sc. and Ph.D. degrees all from the National University of Defense Technology (NUDT), P. R. China, in 2002, 2004 and 2009 respectively. And he is currently a professor in the Department of Cognitive Communications, College of Electronic Science and Technology at NUDT. Prior to this, he visited the Institute of ECIT, Queen’s University of Belfast, UK and Hong Kong Baptist University. His main research interests include wireless communications, cognitive radio networks and self-organized networks. He has served as a TPC member of IEEE ICC'14-22, Globecom'16-22, and guest editor for IEEE Communications Magazine. He is a senior member of IEEE.
\end{IEEEbiography}

\begin{IEEEbiography}[{\includegraphics[width=1in,height=1.25in,clip,keepaspectratio]{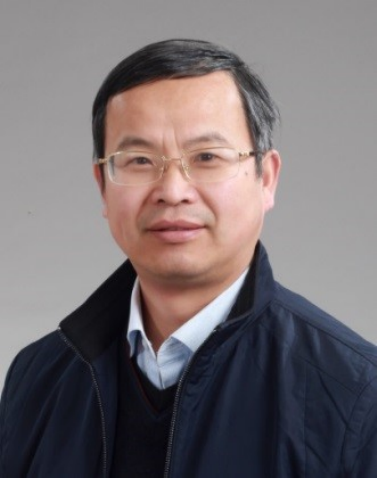}}]{Dongtang Ma}
	(SM’13) received the B.S. degree in applied physics and the M.S. and Ph.D. degrees in information and communication engineering from the National University of Defense Technology (NUDT), Changsha, China, in 1990, 1997, and 2004, respectively. From 2004 to 2009, he was an Associate Professor with the College of Electronic Science and Engineering, NUDT. Since 2009, he is a professor with the department of cognitive communication, School of Electronic Science and Engineering, NUDT. From Aug. 2012 to Feb. 2013, he was a visiting professor at University of Surrey, UK. His research interests include wireless communication and networks, physical layer security, intelligent communication and network. He has published more than 150 journal and conference papers. He is one of the Executive Directors of Hunan Electronic Institute. He severed as the TPC member of PIMRC from 2012 to 2020.
\end{IEEEbiography}

\begin{IEEEbiography}[{\includegraphics[width=1in,height=1.25in,clip,keepaspectratio]{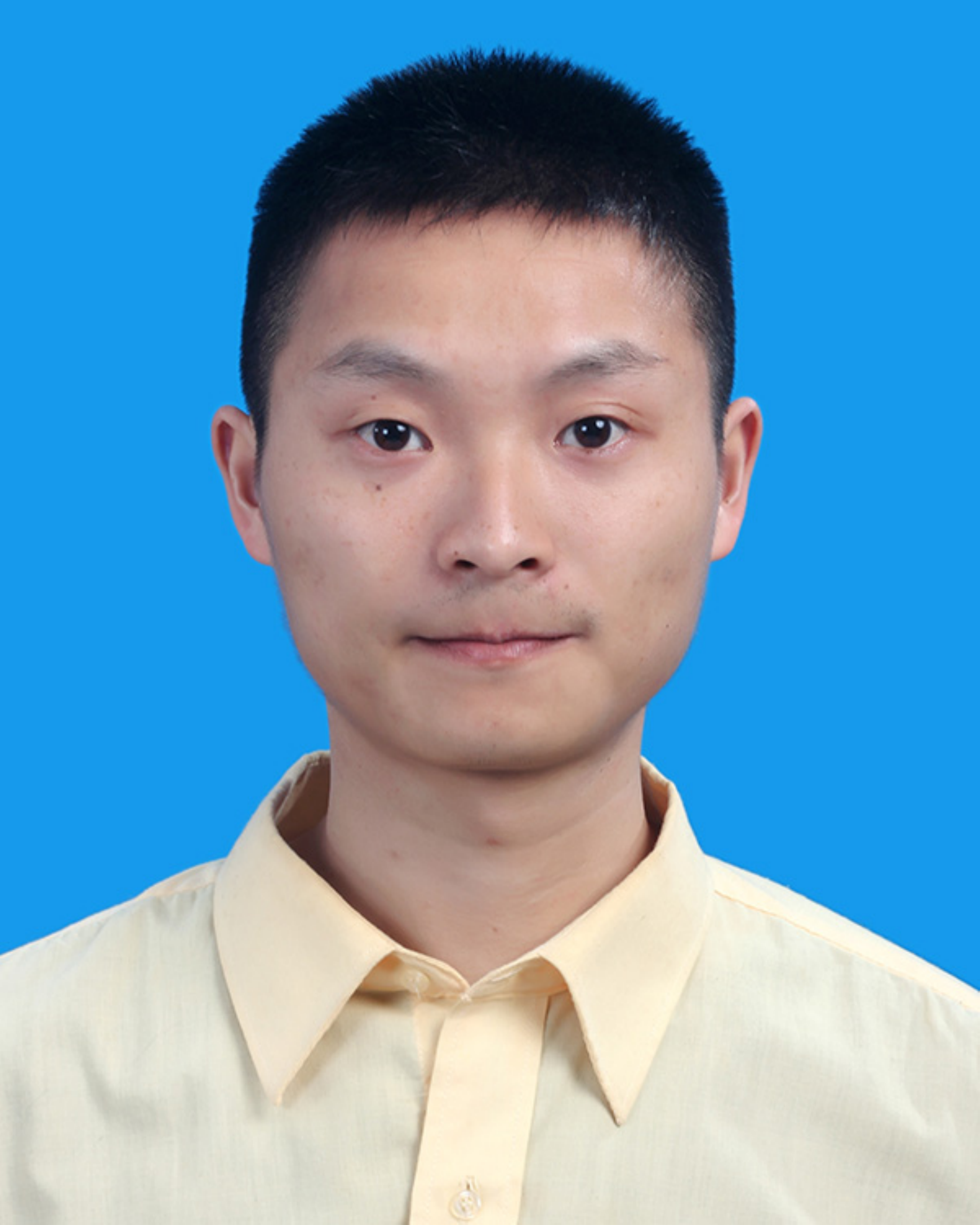}}]{Kai Mei}
	received the master’s degree from the National University of Defense Technology, in 2017, where he is currently pursuing the Ph.D. degree. His research interests include synchronization and channel estimation in OFDM systems and MIMO-OFDM systems, and machine learning applications in wireless communications.
\end{IEEEbiography}

\begin{IEEEbiography}[{\includegraphics[width=1in,height=1.25in,clip,keepaspectratio]{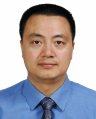}}]{Jibo Wei}
	(Member, IEEE) received the B.S. and M.S. degrees from the National University of Defense Technology (NUDT), Changsha, China, in 1989 and 1992, respectively, and the Ph.D. degree from Southeast University, Nanjing, China, in 1998, all in electronic engineering. He is currently the Director and a Professor of the Department of Communication Engineering, NUDT. His research interests include wireless network protocol and signal processing in communications, more specially, the areas of MIMO, multicarrier transmission, cooperative communication, and cognitive network. He is a member of the IEEE Communication Society and also a member of the IEEE VTS. He also works as one of the editors of the Journal on Communications and is a Senior Member of the China Institute of Communications and Electronics.
\end{IEEEbiography}

\end{document}